\documentclass[conference]{IEEEtran}
\IEEEoverridecommandlockouts

\usepackage{cite}
\usepackage{amsmath,amssymb,amsfonts}
\usepackage{algorithmic}
\usepackage{setspace}
\usepackage{graphicx}
\usepackage{textcomp}
\usepackage{xcolor}
\usepackage{indentfirst}
\usepackage{amsthm}
\usepackage{amsmath}
\usepackage{graphicx}
\usepackage{amssymb}
\usepackage{mathrsfs}
\usepackage{setspace}
\usepackage{booktabs}
\usepackage{url}
\usepackage{color}
\usepackage{textcomp}
\usepackage{cite}
\usepackage{booktabs}
\usepackage{multirow}
\usepackage{subfigure}
\usepackage{mathrsfs}
\usepackage{enumerate}
\usepackage{multicol} 
\usepackage{arydshln}
\usepackage[linesnumbered,ruled,vlined]{algorithm2e}
\usepackage{bbm}
\newtheorem{theorem}{Theorem}
\newtheorem{lemma}{\emph{Lemma}}

\newtheorem{definition}{Definition}

\def\BibTeX{{\rm B\kern-.05em{\sc i\kern-.025em b}\kern-.08em
    T\kern-.1667em\lower.7ex\hbox{E}\kern-.125emX}}
\begin{document}

\title{
Joint Inference on Truth/Rumor and Their Sources in Social Networks 
\vspace{-3mm}
}
\author{Shan Qu, Ziqi Zhao, Luoyi Fu, Xinbing Wang and Jun Xu\\
\{qushan, bugenzhao, yiluofu, xwang8\}@sjtu.edu.cn, jx@cc.gatech.edu
}

\maketitle

\begin{abstract}
In the contemporary era of information explosion,
we are often faced with the mixture of massive \emph{truth} (true
information) and \emph{rumor} (false information) flooded over
social networks. Under such circumstances, it is very essential
to infer whether each claim (e.g., news, messages) is a truth or a
rumor, and identify their \emph{sources}, i.e., the users who initially spread
those claims. While most prior arts have been dedicated to the
two tasks respectively, this paper aims to offer the joint
inference on truth/rumor and their sources. Our insight is that a
joint inference can enhance the mutual performance on both sides.

To this end, we propose a framework named SourceCR, which
 alternates between two modules, i.e., \emph{credibility-reliability training} for
truth/rumor inference and \emph{division-querying} for source detection, in an iterative manner. To elaborate,
the former module performs a simultaneous estimation of  
claim credibility and user reliability by virtue of an Expectation Maximization algorithm, which takes the source
reliability outputted from the latter module as the initial input.
Meanwhile, the latter module divides the network into two different subnetworks labeled via the claim
credibility, and in each subnetwork launches source detection by applying querying of theoretical budget guarantee to the users selected via the estimated
reliability from the former module. The proposed SourceCR is provably convergent, and algorithmic
implementable with reasonable computational complexity. We empirically validate the
effectiveness of the proposed framework 
in both synthetic and real datasets, where the joint inference leads
to an up to 35\% accuracy of credibility gain and 29\% source
detection rate gain compared with the separate counterparts.
\end{abstract}

\vspace{-1.2mm}
\section{Introduction}
\vspace{-1.2mm}
Social networks such as Facebook or Twitter have fundamentally reshaped the way of information sharing in recent years, 
enabling massive amount of information, regardless of being true or false, to widely spread over the networks
\cite{starbird2014rumors,allcott2017social,vosoughi2018spread}. 
This phenomenon may cause severe disturbance to our daily life as false information can often be disguised as the true information in misleading users.
For instance, during the 2019 Sri Lanka Easter bombings, the local government blocks most social medias
to prevent the public from being deluded by fake news.
In general, this circumstance is inflamed by two facts: (i) users can post messages with sufficient freedom. (ii) it is difficult to infer who initially posts the message. 

As a result, it is important to both distinguish whether a claim is a \emph{truth} (true information) or a \emph{rumor} (false information) and find out their \emph{sources}, i.e., the users who initially spread information about the claim. By this way, one can simultaneously accelerate the \emph{truth} spreading and restrain the \emph{rumor} spreading by supervising their sources. While vast methods have been proposed to either distinguish the truth from rumors \cite{shu2017fake, nguyen2018interpretable,pitoura2018measuring,shu2019studying} or detect information sources in social networks \cite{shah2011rumors,chang2015information,wang2017multiple,choi2017rumor,choi2018necessary,choi2019information}, the inferences on truth/rumor and their sources are often performed separately and independently of each another.
However, there are some underlying correlations between the two tasks, which might not have been fully leveraged. Specifically, the correlations are three-folded:  (i) \emph{truth/rumor inference--user reliability:} the user reliability is closely related to the veracity of claims that they comment on \cite{truthsurvey}; (ii) \emph{user reliability--source inference:} 
the user reliability is important auxiliary information to assist in inferring sources, as reliable users, when queried, normally offer more trustworthy answers regarding who might initially spreads the information or who spreads the information to him/her \cite{choi2017rumor,choi2018necessary}.
(iii) \emph{source--truth/rumor inference:} sources have been used to identify the authenticity of hot news during the 2013 Boston Booming \cite{starbird2014rumors} and 2016 U.S. Election \cite{allcott2017social}.
All the three perspectives naturally inspire us to think: \emph{Could we realize the joint inference on truth/rumor and their sources?}

\begin{figure}[t]
\setlength{\belowcaptionskip}{-3mm}
\centering
\includegraphics[width=0.45\textwidth]{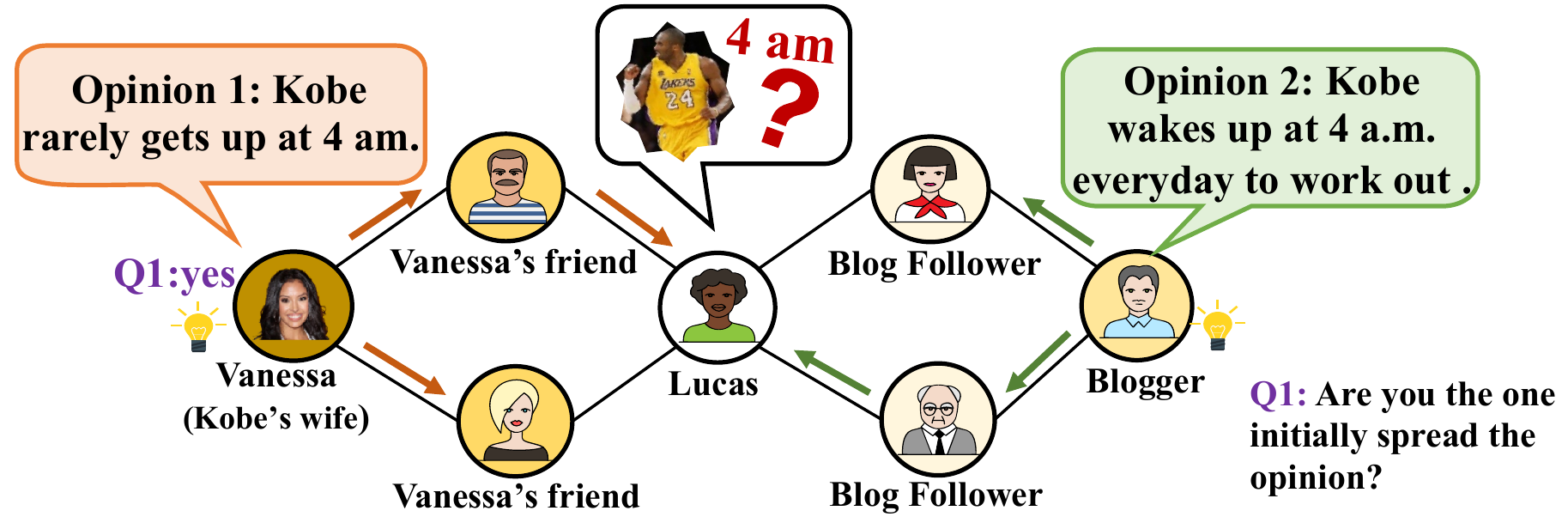}
\vspace{-3mm}
\caption{An example of the joint inference process, where the red arrows (resp. green arrows) represent the spreading paths of Opinion 1 (resp. Opinion 2), and a user with higher reliability is denoted by a circle with deeper yellow.}
\label{kobe}
\vspace{-6mm}
\end{figure}

To proceed, let us consider a motivating example illustrated in Figure \ref{kobe} regarding a widespread claim--\emph{Kobe, the former professional basketball player, wakes up at $4$ a.m. everyday to work out}, which has been verified as a \emph{rumor}.
From the example, we have three observations: (i) Since the claim is a \emph{rumor}, the users who think it is false, i.e., Venessa and her two friends, have higher reliability, and those who think it is true, i.e., Blogger and his two followers, have lower reliability; (ii) From the answer provided by Venessa to $Q1$, one can infer that she may be the source of the correct opinion. This is because that her answer is trustworthy due to high reliability;
(iii) Given the opinions of two sources i.e., Venessa 
and Blogger, we can basically conclude that the claim is a \emph{rumor} since Venessa possesses higher reliability. To summarize, the illustrated observations convey to us that truth/rumor inference and source inference can link together utilizing the user reliability as the bridge.
Following this, we now come to consider the following question: \emph{Could their performance mutually improve via the joint inference?} Obviously, the answer is positive since such mechanism provides more auxiliary information that is beneficial to the inference on both sides.
Motivated by this, in the present work we aim to investigate the following questions:
\begin{enumerate}[$\bullet$]
\vspace{-0.6mm}
  \item How can we mathematically characterize the joint inference on truth/rumor and their sources?
  \item How can we design efficient algorithms to implement such joint inference?
\end{enumerate}
\vspace{-1mm}

To answer the questions, we propose a framework named SourceCR, which includes two modules: \emph{credibility-reliability training} for truth/rumor inference and \emph{division-querying} for source inference. These two modules are operated jointly in an iterative manner, where one relies on the parameters outputted by the other from the last iteration as its new inputs in the next iteration. The workflow of SourceCR, along with the interplay between the two modules, can be briefly unfolded as follows:
\begin{enumerate}[$\bullet$]
\vspace{-1mm}
\item In the former
module, we adopt Expectation-Maximization algorithm to
maximize the likelihood function of users’ opinions on each claim, which takes as input the 
sources reliability estimation obtained from the latter module. The optimal solution derived by the algorithm leads
to accurate measurements on both claim credibility and user
reliability. 
\item In the latter module, we divide the network
into two subnetworks for each claim, and label them as
“correct” or “incorrect” via the claim credibility estimation returned by the former module. Further, in each subnetwork, we select
a part of users in accordance with their reliability
 estimated by the former module for querying,  based on the answers of which we infer sources within a given budget of theoretically guaranteed lower bound.
\end{enumerate}
\vspace{-1mm}
As SourceCR is implemented in the above prescribed way, it learns an ensemble of claim credibility, user reliability and sources, from which truth/rumor can be inferred.

Our key contributions are summarized as follows.
\begin{enumerate}[$\bullet$]
\vspace{-1mm}
  \item \textbf{Design:} We propose a framework--SourceCR, which attempts to jointly infer truth/rumor and their sources with a two-folded insight: (i) 
better source inference provides more accurate source reliability to identify the authenticity of claims,
and (ii) better inference on truth/rumor facilitates the measurement of the
user reliability, which serves as a useful guidance to infer sources.
  \item \textbf{Analysis:} We theoretically analyze the implication of two modules in our proposed framework. Meanwhile, the SourceCR, as a whole, enjoys provable convergence and reasonable computational complexity.
  \item \textbf{Validation:} We conduct experiments on a simulation case and a real application to evaluate SourceCR. Results confirm the superiority of SourceCR in the sense that it outperforms the baselines that focus on truth/rumor inference or source inference separately.
\end{enumerate}
\vspace{-1mm}

\vspace{-1mm}
\section{Related Work}
\vspace{-1mm}
\emph{1) Truth/Rumor Inference:} 
Existing truth/rumor inference works mainly involve three aspects, i.e., user-based, post-based and network-based\cite{shu2017fake, shu2019studying}. The first one \cite{castillo2011information,yang2012automatic} 
identify the reliability of individual user or user groups by
registration age, number of followers/followees and interactions with other users.
The second one \cite{jin2016news,nguyen2018believe,mohtarami2018automatic} classify users’ social
stances of information based on post contents by \emph{supporting},
\emph{defying} or \emph{unknown}. The third one \cite{friggeri2014rumor,ma2015detect,ruchansky2017csi,tacchini2017some} captures features via constructing specific networks.
An important type of constructed networks is the propagation network \cite{friggeri2014rumor,ma2015detect}, which tracks the diffusion trajectory of claims among users. 
Further, truth and rumor are identified by examining temporal, structural, and linguistic characteristics of diffusion\cite{kwon2013prominent,wu2015false}.
However, as the initiator of the diffusion, sources are nearly neglected to be used to identify the veracity of claims.  
One of our focus lies in
incorporating the above three aspects together, and more importantly,
applying
attributes (e.g., reliability) of sources as guidance to improve the performance of
truth/rumor inference.

\emph{2) Source Inference:} 
Various methods are
proposed to infer sources in the propagation network \cite{shah2011rumors,chang2015information,wang2017multiple,choi2017rumor,choi2018necessary,choi2019information}. Among them, querying \cite{choi2017rumor} is a simple but useful one, which infers sources by asking users two kinds of questions, i.e., identify question and direction question.
Different from others, querying approach assumes that the answers provided by users may be untruthful, which depends on the reliability of users.
This reasonable assumption further brings an obvious increase in inference performance.
Besides the simple querying, there is an extended adaptive method\cite{choi2018necessary}.
However, all these querying
methods suppose that all users have the same reliability, and thus cannot work well in realistic networks where the reliability of users usually varies and is hard to be captured \cite{samadi2016claimeval,popat2017truth}.
Thus, our other focus here is to harness heterogeneous reliability of users
derived by truth/rumor inference to make more accurate and efficient source inference.

Note that a joint inference on truth/rumor and their sources is a rewarding issue. To our best knowledge,
there are no works that have been probed into the same problem other than ours.

\vspace{-1mm}
\section{Preliminaries}
\vspace{-1mm}
We model the social network as an undirected graph $G\!=\!(V, E)$, where $V$ is the set of nodes and $E$ is the set of edges. Each node represents a user, and each edge corresponds to the social relationship between two users in $V$. Before the introduction to our tasks, we first define some basic terms:

\vspace{-1mm}
\begin{definition}(\textbf{Basic terms})
\vspace{-0.5mm}
\begin{enumerate}[$\bullet$]
\item A \textbf{claim} is a piece of information (e.g., news and messages), which is defined as a \textbf{truth} if it is true, and a \textbf{rumor} if false.
\item Two opposite opinions simultaneously spread for a claim: (i) the \textbf{pros} is an opinion that thinks the claim is true;
(ii) the \textbf{cons} is an opinion that thinks the claim is false.
\item A \textbf{pros source} is the user who initially spreads pros, and a \textbf{cons source} is the user who initially spreads cons.
\item The \textbf{claim credibility} is a probability of a claim being true. Higher claim credibility implies this claim is more credible.
\item The \textbf{user reliability} is a score that describes the probability of a user providing correct opinions. Higher user reliability implies that this user is more reliable.
\end{enumerate}
\vspace{-0.5mm}
\end{definition}
\vspace{-1mm}
\noindent To facilitate the understanding of the terms in Definition 1, let us consider an example below.\\
\textbf{An example:} A claim (which has been verified to be false) and some posts from six users about it on Weibo social network.\\
\noindent\textbf{Claim:} \emph{2019 Sichuan earthquake in China was caused by a giant dragon.} (\emph{rumor})
\vspace{-0.5mm}
\begin{enumerate} [-]
\item \textbf{u1:} \emph{Experts rushed to this area and said that the dragon was associated with the Sichuan earthquake.} (\emph{pros})
    \item - \textbf{u2:} @u1 \emph{What the experts said is usually correct.} (\emph{pros})
      \item - - \textbf{u3:} @u2 \emph{Wow, dragon! I also believed experts.} (\emph{pros})
\item \textbf{u4:} \emph{Sichuan earthquake is caused by geographical factors rather than myth factors.} (\emph{cons})
      \item - \textbf{u5:} @u4 \emph{Sure, Sichuan is at high earthquake risk.} (\emph{cons})
        \item - - \textbf{u6:} @u5 \emph{It is related to the plate movement.} (\emph{cons})
\end{enumerate}
In the above example, the opinions of u1-u3 are \emph{pros}, and the opinions of u4-u5 are \emph{cons}. Since the claim is a \emph{rumor}, 
\emph{pros} is the incorrect opinion and \emph{cons} is the correct one.

Note that
\emph{pros} and \emph{cons} simultaneously spread following a variant of the widely adopted Susceptible-Infected (SI) model.
Inspired by \cite{choi2019information}, we make slight modifications on SI model to make it fit in with the joint spreading process.

\vspace{-1mm}
\begin{definition} (\textbf{Joint Spreading Process of Pros and Cons})
For each claim, the joint spreading process begins with {pros source} and  {cons source}.
We assume that once a node is infected by {pros} or {cons}, its state will not change. The joint spreading process of {pros} and {cons} is presented as below.
\vspace{-0.5mm}
\begin{enumerate}[$\bullet$]
\item At the initial time $t_0$, all nodes are {susceptible} except that the pros source is infected by {pros} and the cons source is infected by {cons}.
\item At any time $t\!\geq\!t_0$, each {pros infected} (resp. {cons infected}) node independently attempts to infect its susceptible neighbors as {pros infected} (resp. {cons infected}) with success probability $p\!\in\![0,1]$.
\item The process terminates when all nodes are infected.
\end{enumerate}
\vspace{-0.5mm}
\end{definition}
\vspace{-1mm}
To illustrate, let us continue with the aforementioned example of Weibo. The spreading process of \emph{pros} is reflected by the path u1$\rightarrow$ u2$\rightarrow$ u3, and the \emph{cons} spreads along u4$\rightarrow$ u5$\rightarrow$ u6. Meanwhile, u1 is the \emph{pros source} and u4 is the \emph{cons source}.

Besides, for each edge $(i_1, i_2)\in E$, let a random variable $\tau_{(i_1, i_2)}$ be the time it takes for node $i_2$ to receive \emph{pros} (resp. \emph{cons}) from its \emph{pros infected} (resp. \emph{cons infected}) neighbor $i_1$, which is called \emph{infection time}.  In our model, the \emph{infection time} is independent and exponentially distributed with rate $\lambda$ for all edges. Without loss of generality, we assume $\lambda\!=\!1$, which avoids the case that a \emph{susceptible} node is simultaneously infected by a \emph{pros infected} node and a \emph{cons infected} node.

From the above, we are ready to define the joint inference on truth/rumor and their sources as follows.
\vspace{-1mm}
\begin{definition} (\textbf{Joint Inference on Truth/Rumor and their Sources})
Given a set of users $V$ that provides two opposite opinions, i.e., pros and cons, on a set of claims $C$. 
The goals of joint inference are to (i) (truth/rumor inference) identify whether a claim is a truth or rumor, and (ii) (source inference) find out both pros source and cons source for each claim.
\end{definition}
\vspace{-1mm}

\section{The proposed framework: SourceCR}
\vspace{-1mm}
We propose a framework, which we name as SourceCR, to jointly infer truth/rumor and their sources. The workflow of SourceCR is depicted in Figure \ref{framework}. As illustrated, SourceCR
consists of two main modules, i.e., \emph{credibility-reliability training} for truth/rumor inference on the left side, and \emph{division-querying} for source inference on the right side. We briefly summarize their mechanisms, along with the interaction in between, as follows.
\begin{enumerate}[$\bullet$]
\vspace{-0.5mm}
\item We simultaneously estimate the claim credibility and user reliability by taking the reliability of sources inferred from the right module as the initial input in the left module.
\item For each claim, we divide the network into two subnetworks, and label them as ``correct'' or ``incorrect'' via the claim credibility returned by the left module. And for each subnetwork, we infer sources by querying to users with the assistance of user reliability estimated from the left module.
\end{enumerate}
\vspace{-0.5mm}
\vspace{-0.5mm}
\subsection{Truth/Rumor Inference: Credibility-Reliability Training}
\vspace{-0.5mm}
We first introduce the module of truth/rumor inference. To this end, let $x_{ij}$ denote the opinion of user $i\!\in\!\{1,2,\!\cdots\!,M\}$ on claim $j\!\in\!\{1,2,\!\cdots\!,N\}$, where $x_{ij}=1$ if user $i$ thinks claim $j$ is true, and $x_{ij}=-1$ if user $i$ thinks claim $j$ is false. Before the illustration on the training process, we first formulate the claim credibility and user reliability as follows:
\vspace{-0.5mm}
\begin{enumerate}[$\bullet$]
\item \emph{The credibility of claim $j$ is} 
\begin{equation*}\small
\setlength{\abovedisplayskip}{1.5pt}
\setlength{\belowdisplayskip}{1pt}
\begin{aligned}
\lambda_j=P(z_j=1|X_j),
\end{aligned}
\end{equation*}
where $z_j$ is a latent variable to represent whether claim $j$ is a \emph{truth} or a \emph{rumor} and follows \emph{truth}$=1$, \emph{rumor}$=-1$, and $X_j$ represents the opinions on the $j$-th claim from all users.
\item \emph{The reliability of user $i$ is} 
\begin{equation*}\small
\setlength{\abovedisplayskip}{0.5pt}
\setlength{\belowdisplayskip}{3pt}
\label{reliability}
    \begin{aligned}	
    \eta_i=\frac{\eta^1_i+\eta^{-1}_i}{2},
    \end{aligned}
    \end{equation*}
where $\footnotesize{\eta^1_i\!=\!P(z_j\!=\!1|x_{ij}\!=\!1)}$ is the probability that claim $j$ is a \emph{truth} when user $i$ thinks claim $j$ to be true, and $\footnotesize{\eta^{-1}_i\!=\!P(z_j\!=\!-1|x_{ij}\!=\!-1)}$ denotes the probability that claim $j$ is a \emph{rumor} when user $i$ thinks claim $j$ to be false.
\end{enumerate}
\vspace{-0.5mm}

\begin{figure}[t]
\setlength{\abovecaptionskip}{-1mm}   
\setlength{\belowcaptionskip}{-19mm}
\centering
\includegraphics[width=0.42\textwidth]{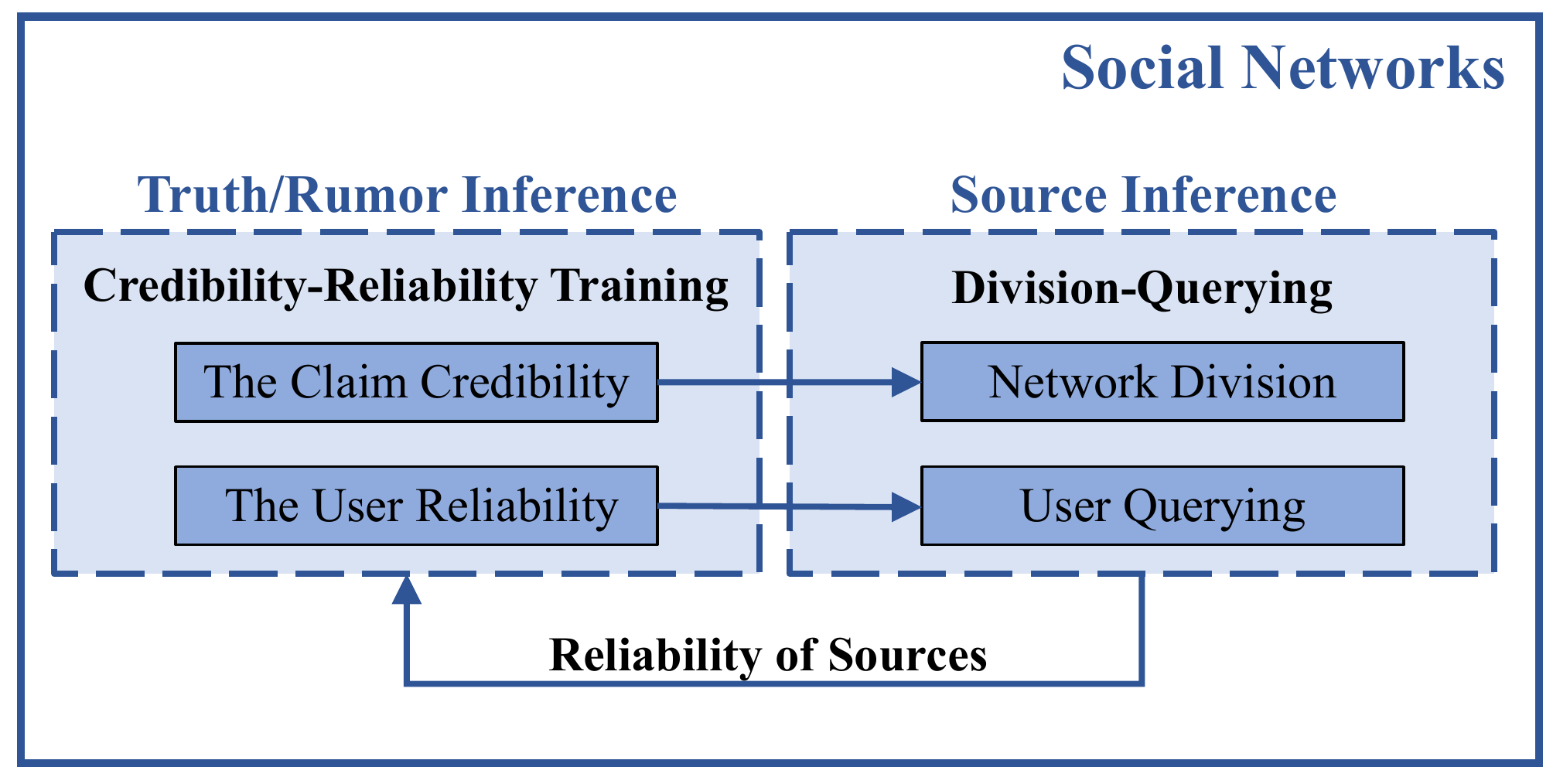}
\caption{The workflow of the proposed framework -- SourceCR.}
\vspace{-5mm}
\label{framework}
\end{figure}

Then, we introduce the techniques in the training process in detail. We denote
all users' opinions as $X$, the set of claims as $C$, and the set of latent variables as $Z\!\!=\!\{z_j|j\!\in\!C\}$. The goal of this module is to obtain the maximum likelihood estimation (MLE) of the unknown parameters set $\theta\!\!=\!\!\{\eta^1_i, \eta^{-1}_i|i\!\!\in\!\!V\}$ given $X$ and $Z$. The likelihood function $L(\theta;X, Z)$ is expressed as
\begin{equation*}\small
\setlength{\abovedisplayskip}{1pt}
\setlength{\belowdisplayskip}{1pt}
    \begin{aligned}	
    L(\theta;X, Z)=&\prod_{j\in C}\left[\frac{1+z_j}{2}\!\times\!P(X_j|z_j\!=\!1)\!\times\!P(z_j\!=\!1)\right.\\
    &\left.+\frac{1-z_j}{2}\!\times\!P(X_j|z_j\!=\!-1)\!\times\!P(z_j\!=\!-1)\right].
    \end{aligned}
    \end{equation*}

We adopt the Expectation-Maximization (EM) algorithm to estimate the parameters in $\theta$. EM is a widely used algorithm for finding the MLE of parameters involved in latent variables in statistics with the mechanism of performing the E-step and M-step iteratively until convergence: (i) in the $t$-th E-step, it computes the expectation of log likelihood, i.e., $Q(\theta|\theta^{(t)})=E_{Z|X,\theta^{(t)}}[\log L(\theta;X,Z)]$. Note that the expectation is taken with regards to the conditional distribution of latent variables given users' opinions $X$ and the parameters set $\theta^{(t)}$ derived in the $(t\!-\!1)$-th round, and (ii) in the $t$-th M-step, it finds out the new estimation $\theta^{(t\!+\!1)}$ that maximizes the expectation function $Q(\theta|\theta^{(t)})$. 
Result 1 states how the E-step and M-step perform in our credibility-reliability training process. (Please see Appendix A for the proof of Result 1.) 

\begin{figure}[t]
\setlength{\abovecaptionskip}{-0.48mm}   
\centering
\includegraphics[width=0.48\textwidth]{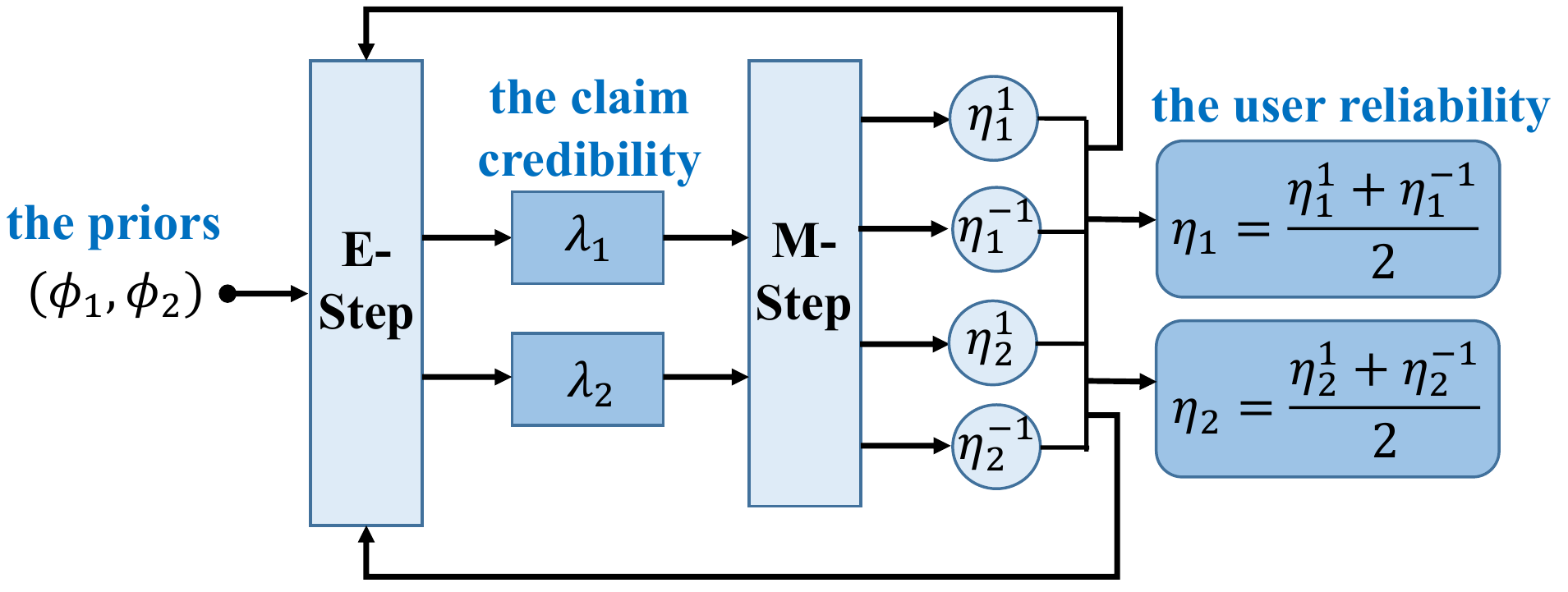}
\caption{An example of the credibility-reliability training process.}
\vspace{-4mm}
\label{emgraph}
\end{figure}

\textbf{Result 1:} The credibility-reliability training process is to repeat the following E-Step and M-Step until convergence. 
\begin{enumerate}[$\bullet$]
\item  The $t$-th E-Step is to calculate
\begin{equation}\small
\setlength{\abovedisplayskip}{4pt}
\setlength{\belowdisplayskip}{4pt}
\label{E}
    \begin{aligned}	
    \lambda_j^{(t)}\!
    =\!\!\frac{1}{1\!+\!\left(\frac{\phi_j}{1\!-\!\phi_j}\right)^{|V|}\prod\limits_{i\in V}\Big(\frac{1\!-\!\eta^{1(t)}_i}{\eta^{1(t)}_i}\Big)^{\frac{1\!+\!x_{ij}}{2}}\Big(\frac{\eta^{-1(t)}_i}{1\!-\!\eta^{-1(t)}_i}\Big)^{\frac{1-x_{ij}}{2}}},\\
    \end{aligned}
    \end{equation}
where $\phi_j\!=\!P(z_j\!=\!1)$ represents the prior that claim $j$ is a truth.
Noticeably, when performing the credibility-reliability training, $\phi_j$ is obtained from the division-querying module.
\item The $t$-th M-step is to calculate
\begin{equation}\small
\setlength{\abovedisplayskip}{3pt}
\setlength{\belowdisplayskip}{3pt}
\label{M}
    \begin{aligned}
    \eta_i^{1(t\!+\!1)}&\!=\!\frac{\sum_{j\in C^{1}_i} \lambda^{(t)}_j}{\sum_{j\in C^{1}_i} \lambda^{(t)}_j+\sum_{j\in C^{-1}_i} \Big(1-\lambda^{(t)}_j\Big)},\\
    \eta_i^{-1(t\!+\!1)}&\!=\!\frac{\sum_{j\in C^{1}_i} \Big(1-\lambda^{(t)}_j\Big)}{\sum_{j\in C^{1}_i} \Big(1-\lambda^{(t)}_j\Big)+\sum_{j\in C^{-1}_i} \lambda^{(t)}_j},
    \end{aligned}
    \end{equation}
where $\footnotesize{C^{1}_i\!\!=\!\!\{j|x_{ij}\!=\!1,i\!\in\!V\}}$ and $\footnotesize{C^{-1}_i\!\!=\!\!\{j|x_{ij}\!=\!-1,i\!\in\!V\}}$.
\end{enumerate}
For ease of understanding of the training process, let us further illustrate by an example 
with two users and two claims in Figure \ref{emgraph}. In E-Step, the credibility of the two claims $\{\lambda_1,\lambda_2\}$ are calculated by the priors $(\phi_1, \phi_2)$. In M-Step, each user $i$ corresponds to two probabilities, $\eta^{1}_i$ and $\eta^{-1}_i$, $i\in\{1,2\}$. Then the reliability of each user $\eta_i$ is calculated by a simple arithmetic average.
In our work, the convergence condition is that the estimated values of $\eta^{1}_i$ and $\eta^{-1}_i$ between two consecutive iterations are no larger than 0.01 for all users. Note that the convergence of EM algorithm itself has been proven in \cite{wu1983convergence}, which is beyond the scope of this paper. 
Algorithm \ref{em} summarizes the pseudo code of the iterative process of our proposed training module,
which outputs the credibility of claim $j$, i.e., $\widetilde{\lambda}_j$, and the reliability of user $i$, i.e., $\widetilde{\eta}_i$.

\subsection{Source Inference: Division-Querying}
Based on the claim credibility and user reliability obtained from the previous training process, we now come to infer
the sources of each claim by the division-querying module.
For each claim, this module executes the following two steps:\\
\noindent\textbf{Step 1} (\textbf{Network Division}) Divide the network into two subnetworks--\emph{pros subnetwork} and \emph{cons subnetwork}, and label them with \emph{``correct''} or \emph{``incorrect''} via the claim credibility.\\
\noindent\textbf{Step 2} (\textbf{Querying}) Ask two queries to the selected respondents in each subnetwork: i) are you the source? ii) if you are not the source, which neighbor spreads the information to you?

In the following, we present the techniques details of the above two steps.

\setlength{\intextsep}{3pt}
\setlength{\textfloatsep}{3pt}
\begin{algorithm}[t]
\setstretch{0.66}
\caption{Credibility-Reliability Training}
\begin{footnotesize}
\label{em}
\KwIn{The prior $\phi_j$; Users' opinions $X$ \\
}
\KwOut{Credibility of claim $j$: $\widetilde{\lambda}_j$; Reliability of user $i$: $\widetilde{\eta}_i$
}

\vspace{2mm}
Initialize $\eta^{1(0)}_i$ and $\eta^{-1(0)}_i$ by generating random numbers in (0,1)\\
$t\leftarrow 0$; 
\\
\While{$\theta^{(t)}=\{\eta^{1(t)}_i, \eta^{-1(t)}_i\}$ dose not converge}{
\textbf{(E-Step)}\\
\For{$j=1: N$}{Compute $\lambda^{(t)}_j$ according to equation (\ref{E}).}
\textbf{(M-Step)}\\
\For{$i=1:M$}{Compute $\eta^{1(t+1)}_i$ and $\eta^{-1(t+1)}_i$ based on equation (\ref{M}).}
$t\leftarrow t+1$
}
$\widetilde{\lambda}_j, \widetilde{\eta}^{1}_i, \widetilde{\eta}^{-1}_i\leftarrow$ the converged values of $\lambda_j$, $\eta^{1}_i$, $\eta^{-1}_i$\\
$\widetilde{\eta}_i\leftarrow \frac{\widetilde{\eta}^{1}_i+\widetilde{\eta}^{-1}_i}{2}$\\
\Return $\widetilde{\lambda}_j$, $\widetilde{\eta}_i$
\end{footnotesize}
\end{algorithm}

\emph{1) Details of Step 1:}
For each claim $j$, the network $G$ is divided into two subnetworks: 
\begin{enumerate}[-]
\item the \emph{pros subnetwork} $\overline{G}_\text{pro}(j)$ (resp. \emph{cons subnetwork} $\overline{G}_\text{con}(j)$) includes all nodes that think this claim is a \emph{truth} (resp. a \emph{rumor})
with the edges in between extracted from the underlying network G.
\end{enumerate}
From the credibility of claim $j$, i.e., $\widetilde{\lambda}_j$, obtained in the previous training module,  we first make the following judgement: claim $j$ is a \emph{truth} if $\widetilde{\lambda}_j\!\geq\!0.5$ and \emph{rumor} otherwise. Then, if claim $j$ is a \emph{truth}, we label its \emph{pros subnetwork} as \emph{``correct''}, and \emph{cons network} as \emph{``incorrect''}. Otherwise, if claim $j$ is a \emph{rumor}, we label its \emph{pros subnetwork} as \emph{``incorrect''}, and \emph{cons network} as \emph{``correct''}.
Accordingly, we infer the source in each subnetwork with the querying method described in Step 2.

\emph{2) Details of Step 2:}
Suppose that there is a fixed budget $K$ and a given parameter $r$.
We first select $K/r$ (assuming $K$ is a multiple of $r$ for expositional convenience) respondents from all nodes in the subnetwork.
Then, for each respondent, we make $r$ rounds of querying, with each round consuming a unit budget as well as two following questions:
\begin{enumerate}[-]
\item \emph{identity (id) question: Are you the source?}
\item \emph{direction (dir) question: Which neighbor spreads the information to you?}
\end{enumerate}
The workflow of each round is: (i) asking the \emph{id question} at first, and (ii) if the respondent answers ``yes'', the round is terminated. Otherwise, the \emph{dir question} is asked subsequently.
Note that there is an important rule for the querying method:
\begin{enumerate}[-]
  \item \emph{The trustworthiness of each answer depends on the reliability of its corresponded respondent.}
\end{enumerate}
Thus,
the major challenge in this module lies in the following problem:
\emph{How can we proceed the querying with heterogeneous reliability of users?}

Our solution is to sort the reliability of users, and select respondents for ``correct'' and ``incorrect'' networks according to the attributes of theirs sources with the sorting interval of reliability, respectively.
For the sake of simplicity, we omit the subscript $j$ and simply use $\overline{G}$ to denote the subnetwork, irrespective of \emph{pros subnetwork} or \emph{cons subnetwork}.

\textbf{Selected Respondents:}
The set of selected respondents is denoted as $S$, whose
selection rules are as follows.
\begin{enumerate}[a)]
\item\footnote{Explanation of a): Rumor centrality is a typical ``graph score'' function that takes $G = (V, E)$ as input and assigns
a non-negative number  to each of the vertices. The rumor centrality of a given node characterizes precisely the number of distinct
spreading orders that could lead to the rumor infected graph $G$ starting from this node. While it is originally used to detect the rumor source, we apply it to infer both sources of the ``correct'' and ``incorrect'' network here, since the source inference in each divided subnetwork only relies on the topological characteristic
after proceeding the network division, which also corresponds to the structural attributes of the metric itself.} Initially, we choose a center node $v_c$ based on the metric \emph{rumor centrality} in \cite{shah2011rumors} as below. 
\begin{equation*}\small
\setlength{\abovedisplayskip}{3pt}
\setlength{\belowdisplayskip}{3pt}
\begin{aligned}	
v_c=\arg\max_{v\in \overline{G}}P(\sigma_v|v)R\left(v,T_{bfs}(v)\right),
\end{aligned}
\end{equation*}

where $T_{bfs}(v)$ is an BFS tree rooted at $v$ to describe the spreading path, $\footnotesize{R(v,T_{bfs}(v))}$ is the \emph{rumor centrality} of node $v$ in $\footnotesize{T_{bfs}(v)}$, and $\sigma_v$ is the possible permutation of the infection order starting with $v$ and resulting in $\overline{G}$.
\item Then, we select $2K/r$ nodes (including $v_c$) in an increasing order of the hop-distance from the center node $v_c$.
\item\footnote{Explanation of c): On one hand, when selecting respondents, we tend to choose the nodes with high reliability since more reliable nodes usually offer more trustworthy querying answers. On the other hand,
many studies \cite{starbird2014rumors,allcott2017social} have verified that sources with correct opinions are more likely to posses higher reliability, and lower reliability otherwise. These two aspects are consistent in the subnetwork that is labeled with \emph{``correct''} but contradicted in the one labeled with \emph{``incorrect''}. Thus, we select nodes with high reliability in the ``correct" subnetwork, while in the ``incorrect" one we select some nodes with high reliability and others with low reliability considering the above two aspects in the latter one.}
    At last, for the subnetwork labeled with ``\emph{correct}'', we add the top $K/r$ nodes with the highest reliability into the set $S$. And for the subnetwork labeled with ``\emph{incorrect}'', we include the top $K/2r$ nodes with both the highest reliability and lowest reliability into the set $S$. 
\end{enumerate}

\textbf{Filtered Sets:} After performing $r$ rounds of querying to each respondent, we filter these selected respondents into two sets, i.e., $T_{id}$ and $T_{dir}$ via their answers to the \emph{id question} and the \emph{dir question}, respectively. For each respondent $v$, we conduct the filtering following the majority rule \cite{choi2017rumor} as below:
\begin{enumerate}[a)]
  \item The respondent $v$ is included in the set $T_{id}$ if the number of ``yes'' to the \emph{id question} is no smaller than $r/2$.
  \item We count the number of times $v$ appeared in each respondent's answers to the \emph{dir question}, and add $v$ into $T_{dir}$ if $v$ is pointed out with the largest count among all respondents. 
\end{enumerate}

Combining the selection and filtering of respondents, Algorithm \ref{detection} summarizes the implementation of division-querying algorithm, the output of which is a set of sources $\widetilde{V}$. Particularly, Line 2 in the algorithm conducts the network division. Then, operations in Lines 3-5 adopt the querying method to select the respondents and derive two filtered sets $T_{id}$ and $T_{dir}$. After that, Lines 6-9 select the source by maximizing the likelihood function in $T_{id}\cup T_{dir}$ if $T_{id}\cap T_{dir}=\varnothing$, or $T_{id}\cap T_{dir}$ otherwise. And finally, the source is added in the set $\widetilde{V}$, as indicated in Line 10.

\setlength{\textfloatsep}{-0.6mm}
\begin{algorithm}[t]
\setstretch{0.94}
\caption{Division-Querying}
\begin{footnotesize}
\label{detection}
\KwIn{Credibility of claim $j$: $\widetilde{\lambda}_j$; Reliability of user $i$: $\widetilde{\eta}_i$}
\KwOut{A set of sources $\widetilde{V}$
}

\vspace{2mm}
\For{each claim $j$}{
\textbf{(Network Division)}
Divide the network $G$ into two subnetworks: \emph{pros subnetwork} and \emph{cons subnetwork} and label them with \emph{``correct''} or \emph{``incorrect''} based on the value of $\widetilde{\lambda}_j$.\\
\For{each subnetwork $\overline{G}$}{\textbf{(Selected Respondents)}
Choose the respondents set $S$ and perform $r$ rounds of querying to each respondent in $S$ according to the value of $\widetilde{\eta}_i$.\\
\textbf{(Filtered Sets)} 
Construct the filtered set $T_{id}$ and $T_{dir}$ according to answers to the \emph{id question} and the \emph{dir question}, respectively.\\
\eIf{$T_{id}\cap T_{dir}=\varnothing$}{$\widetilde{v}\leftarrow \mathop{\arg\max}_{v\in T_{id}\cup T_{dir}}{P(\overline{G}|v=v^*)}$}{$\widetilde{v}\leftarrow \mathop{\arg\max}_{v\in T_{id}\cap T_{dir}}{P(\overline{G}|v=v^*)}$}
$\widetilde{V}\leftarrow \widetilde{v}$
}
}
\Return $\widetilde{V}$
\end{footnotesize}
\end{algorithm}

Meanwhile, to demonstrate the efficiency of the querying method, we analyze the lower bound of the necessary budget on source inference in Theorem \ref{necessary}, with the proof deferred to Appendix B. Note that the theoretical analysis is hard to be conducted in realistic networks with complicated topological structure. For the sake of tractability, we relax the constraints on topology and choose the regular tree to make analysis. The obtained results can be regarded as the lower bound of that in realistic networks. The impacts of the budget on networks of more general topologies are   
experimentally verified by two real datasets in Section \uppercase\expandafter{\romannumeral5}.
\vspace{-1mm}
\begin{theorem}
\label{necessary}
For any $0\!<\!\delta\!<\!1$, there is no querying algorithm with $r$ rounds can achieve $(1-\delta)$ detection rate under $d$-regular tree, if the budget $K$ satisfies
\begin{equation*}\small
\setlength{\abovedisplayskip}{3pt}
\setlength{\belowdisplayskip}{3pt}
\begin{aligned}	
K\leq \left[(1-\delta)+c_2\cdot e^{-l\cdot \log l}\right]H_G,
\end{aligned}
\end{equation*}
where
$\scriptstyle{c_2=\frac{4d}{3(d-2)}}$,
$\scriptstyle{l=\log \left(\frac{2K(d\!-\!2)}{rd}+\!2\right)/\log (d\!-1)}$ and $\scriptstyle{H_G}$ is expressed as $\scriptstyle{H_G=\left[(1\!-\!\tilde{H_1})+f(1\!-\!f) 2^{K/r-1}(\log d-\tilde{H_2})\right]/[H(T)\left(K/r\!-\!1\right)\cdot \log (K/2r)}]$. Note that $\scriptstyle{\tilde{H_1}}$, $\scriptstyle{\tilde{H_2}}$ are both constants, $\scriptstyle{f}$ is a function of the user reliability, and $\scriptstyle{H(T)}$ is the entropy of the infection time vector.
\end{theorem}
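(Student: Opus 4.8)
The plan is to establish this as an information-theoretic converse (impossibility) result via Fano's inequality. First I would cast source detection as an estimation problem: put a uniform prior on the true source $v^*$ over the candidate nodes of the $d$-regular tree, and let $\mathcal{A}$ denote the full transcript the algorithm collects, i.e., the answers to the \emph{id} and \emph{dir} questions from the $K/r$ selected respondents across all $r$ rounds. Since any estimator $\widehat{v}$ is a (possibly randomized) function of $\mathcal{A}$, Fano's inequality gives
\begin{equation*}
P(\widehat{v}\neq v^*)\;\geq\;\frac{H(v^*\mid \mathcal{A})-1}{\log|\text{candidates}|},
\end{equation*}
so it suffices to lower-bound $H(v^*\mid\mathcal{A})$, equivalently to upper-bound the mutual information $I(v^*;\mathcal{A})$, by a quantity that stays small whenever $K$ is small. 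Showing the resulting error exceeds $\delta$ then rules out every algorithm.

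Second I would decompose the uncertainty into the two effects that $H_G$ encodes. The intrinsic randomness of the diffusion enters through the infection-time vector: conditioned on $v^*$, the number of spreading orders consistent with the observed subnetwork $\overline{G}$ is exactly the \emph{rumor centrality} of $v^*$, and this combinatorial multiplicity on a regular tree is what produces the $2^{K/r-1}$ growth factor and the $\log(K/2r)$ term, while the exponential infection times contribute $H(T)$ in the denominator of $H_G$. The second effect is answer noise: because a respondent answers truthfully only with probability governed by its reliability, each querying round passes the latent source identity through a noisy channel, so a single respondent over $r$ rounds reveals at most a bounded amount of information controlled by $f(1-f)$. Aggregating the per-respondent information over the $K/r$ respondents yields the $(K/r-1)$ factor and an upper bound on $I(v^*;\mathcal{A})$ that scales linearly in the budget.

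Third I would control the event that the true source lies outside the region the respondents can cover. Step b) of the selection visits $2K/r$ nodes in increasing hop-distance from the rumor-centrality center $v_c$, so the reachable ball has radius $l=\log\!\big(\tfrac{2K(d-2)}{rd}+2\big)/\log(d-1)$, the depth at which a $d$-regular tree accumulates $2K/r$ vertices. Using that rumor centrality concentrates the posterior near $v_c$, the probability that $v^*$ sits beyond distance $l$ admits a factorial-type tail bound of order $e^{-l\log l}$, with the constant $c_2=\tfrac{4d}{3(d-2)}$ arising from summing the geometric node counts on the tree. Whenever the source is outside this ball, detection fails deterministically; folding this extra error into the Fano bound and rearranging the inequality $P(\widehat{v}\neq v^*)>\delta$ into a condition on $K$ is what yields exactly $K\leq\big[(1-\delta)+c_2 e^{-l\log l}\big]H_G$.

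The main obstacle I expect is the tight accounting in the second step: translating the combinatorial count of diffusion-consistent spreading orders (rumor centrality) and the reliability-dependent noisy answers \emph{jointly} into a single clean mutual-information bound of the form $H_G$. In particular, showing that the per-respondent information is governed by $f(1-f)$ and that the aggregate ordering multiplicity on the regular tree collapses to the $2^{K/r-1}(\log d-\tilde H_2)$ term requires careful bookkeeping of the tree's automorphisms and the exponential infection-time law; isolating the universal constants $\tilde H_1,\tilde H_2$ so that the bound holds uniformly in $d$ is where most of the technical effort will concentrate.
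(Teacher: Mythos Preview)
Your three-part decomposition---probability that $v^*$ lies in the respondent set $S$, conditional detection probability given $v^*\in S$, then combine and rearrange---matches the paper's structure exactly (its Lemma~1, Lemma~2, and Step~3). Your third step and the paper's Lemma~1 are essentially the same argument, including the origin of $l$ and $c_2$.

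The gap is in your middle step. The paper does not run Fano on $I(v^*;\mathcal{A})$. Instead it bounds $P(\hat v=v^*\mid v^*\in S)$ via a \emph{graph estimator}: it conditions on $N_{dir}=k$, the number of respondents who answer ``no'' to the \emph{id} question (and hence receive a \emph{dir} question), bounds $P(N_{dir}=k)\le\binom{K/r}{k}\eta_{\max}^k(1-\eta_{\min})^{K/r-k}$ from the heterogeneous reliabilities, and then applies an existing information-theoretic bound on graph recovery (from Netrapalli--Sanghavi) for $P(\hat G=\widetilde G\mid N_{dir}=k)$. The factor $2^{K/r-1}$ therefore does \emph{not} come from rumor-centrality multiplicity as you conjecture; it comes from the elementary identity $\sum_k k\binom{K/r}{k}=(K/r)\,2^{K/r-1}$ when summing over $k$. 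Likewise $f$ is not a per-round channel quantity: it is defined by a case split on whether $\eta_{\max}+\eta_{\min}>1$, upper-bounding the reliability product by $\eta_{\max}^{K/r}$ or $(1-\eta_{\min})^{K/r}$. The constants $\tilde H_1,\tilde H_2$ are explicit binary-entropy-type expressions in $\eta_{\min},\eta_{\max}$, not universal constants to be isolated.

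Consequently, a direct Fano argument on the transcript may yield \emph{an} impossibility bound, but it will not reproduce the specific $H_G$ in the statement, since that form is an artifact of the condition-on-$N_{dir}$ calculation and the cited graph-estimation bound. If your goal is to match the stated theorem, replace the mutual-information step with this conditioning-and-summing argument; if you only need some converse, your Fano route is a legitimate alternative, but you should drop the rumor-centrality explanation of the $2^{K/r-1}$ term.
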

\vspace{-1mm}

\vspace{-1.3mm}
\subsection{Implementation of SourceCR}
\vspace{-1.1mm}
Based on the above introduced two modules, we arrive at the implementation of our proposed framework -- SourceCR, which alternates between the two previously described modules through the following iterative refinement process:
\vspace{-0.2mm}
\begin{enumerate}[a)]
  \item \textbf{Initialization}
Randomly generate $\phi^{}_j$ in (0,1) 
for each claim $j\!\in\!\{1,\!\cdots\!,N\}$. 
  \item \textbf{Credibility-Reliability Training} Regard the probability $\phi^{}_j$ as the prior of Algorithm 1. We derive the credibility of claim $j$, i.e., $\widetilde{\lambda}_j$, and the reliability of user $i$, i.e., $\widetilde{\eta}_i$.
  \item \textbf{Division-Querying} Utilize $\widetilde{\lambda}_j$ and $\widetilde{\eta}_i$ to divide the network, and infer the \emph{pros source} and the \emph{cons source} of claim $j$ in Algorithm 2, where $\widetilde{\lambda}_j$ is applied to make labels and $\widetilde{\eta}_i$ is adopted to launch the querying. The set $\widetilde{V}$ includes the sources (irrespective of \emph{pros source} or \emph{cons source}) for all claims in the network.
  \item \textbf{Iteration} Check whether the convergence criterion is satisfied. If not,
  denote the reliability of \emph{pros source} as $\eta_{{v}_{pro}(j)}$ and the reliability of \emph{cons source} as $\eta_{{v}_{con}(j)}$.
Based on the widely adopted observation \cite{starbird2014rumors,allcott2017social,vosoughi2018spread} that \emph{the credibility of a claim is mainly determined by the reliability of pros source and cons source when users' opinions are unknown}, we refine the prior $\phi_j$ by
   \begin{equation}\small
   \setlength{\abovedisplayskip}{3pt}
\setlength{\belowdisplayskip}{4pt}
   \label{refine1}
   \begin{aligned}
   \phi_j\leftarrow \frac{\eta_{{v}_{pro}(j)}}{\eta_{{v}_{pro}(j)}+\eta_{{v}_{con}(j)}}.
   \end{aligned}
   \end{equation}
And then return to step b).
\end{enumerate}
\vspace{-1mm}

The pseudo code of the above implementation is given in Algorithm \ref{SourceCR}. The 
convergence criterion is set as the differences of $\widetilde{\lambda}_j$ between two consecutive iterations no larger than 0.001 for all claims.
When terminated, it produces the credibility of claim $j$, i.e., $\hat{\lambda}_j$, reliability of user $i$, i.e., $\hat{\eta}_i$, \emph{pros source} $\hat{v}_{pro}(j)$ and \emph{cons source} $\hat{v}_{con}(j)$ of claim $j$. Recall that a claim is identified as a \emph{truth} if $\hat{\lambda}_j\!\geq\!0.5$ and a \emph{rumor} otherwise.

In the sequel, we will prove that the framework possesses two remarkable advantages: 
(i) Convergence guaranteed - SourceCR is theoretically guaranteed to converge, and (ii) Algorithm feasible - SourceCR enables the truth/rumor inference and source inference to promote together by our Algorithm 3. Theorem 2 and 3 present the corresponding performance.

\begin{theorem}
 The iterative process of our proposed framework SourceCR is guaranteed to converge.
\end{theorem}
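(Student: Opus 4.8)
The plan is to regard one full outer iteration as a deterministic self-map $T$ on the prior vector $\boldsymbol{\phi}=(\phi_1,\dots,\phi_N)\in(0,1)^N$: given $\boldsymbol{\phi}$, Algorithm \ref{em} returns $\widetilde{\lambda}_j(\boldsymbol{\phi})$ and $\widetilde{\eta}_i(\boldsymbol{\phi})$ (well-defined because the inner recursion of (\ref{E})--(\ref{M}) is itself convergent by \cite{wu1983convergence}), Algorithm \ref{detection} then selects the sources, and (\ref{refine1}) produces the updated prior $\boldsymbol{\phi}'=T(\boldsymbol{\phi})$. Convergence of SourceCR is exactly convergence of the orbit $\boldsymbol{\phi}^{(k+1)}=T(\boldsymbol{\phi}^{(k)})$, and since $\widetilde{\lambda}_j$ depends continuously on $\boldsymbol{\phi}$ away from the source-selection boundaries, it suffices to establish convergence of $\{\boldsymbol{\phi}^{(k)}\}$. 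I would pursue the standard monotone-bounded route rather than a direct contraction estimate.

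First I would record the boundedness the scheme enjoys for free: every quantity produced---$\phi_j,\widetilde{\lambda}_j,\eta_i^{1},\eta_i^{-1},\widetilde{\eta}_i$---lies in $[0,1]$ by construction of (\ref{E}), (\ref{M}) and (\ref{refine1}), so the state lives in a compact set and it remains only to force the oscillation of the orbit to vanish. Next I would exhibit a potential that does not decrease along the orbit. The natural candidate is the expected complete-data log-likelihood $\mathcal{L}(\theta,\boldsymbol{\phi})$ underlying the E/M derivation, now viewed as a function of both the reliability parameters $\theta$ and the prior $\boldsymbol{\phi}$. The inner EM pass is coordinate ascent in $\theta$ and hence satisfies $\mathcal{L}(\theta^{(k+1)},\boldsymbol{\phi}^{(k)})\ge\mathcal{L}(\theta^{(k)},\boldsymbol{\phi}^{(k)})$ by the majorization property cited above. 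Because $\mathcal{L}$ is smooth and strictly concave in each $\phi_j$ on $(0,1)$ with interior maximizer $\phi_j=\widetilde{\lambda}_j$, the prior step is non-decreasing precisely when the refined value in (\ref{refine1}) does not move $\phi_j$ away from $\widetilde{\lambda}_j$; verifying this inequality yields $\mathcal{L}(\theta^{(k+1)},\boldsymbol{\phi}^{(k+1)})\ge\mathcal{L}(\theta^{(k+1)},\boldsymbol{\phi}^{(k)})$, and summing the two inequalities gives a monotone, bounded-above sequence $\mathcal{L}(\theta^{(k)},\boldsymbol{\phi}^{(k)})$, which therefore converges.

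Finally I would convert convergence of the potential into convergence of the iterates. Here the one genuinely discrete ingredient---the $\arg\max$ selection of $\hat{v}_{pro}(j),\hat{v}_{con}(j)$ over the finite vertex set in Algorithm \ref{detection}---must be handled: $T$ is piecewise continuous, constant-selection on finitely many cells of $(0,1)^N$. I would argue that since there are only finitely many possible source assignments, the orbit eventually settles into one cell, because a strict increase of the bounded potential cannot recur infinitely often across distinct assignments; after that, $T$ restricts to a single smooth self-map of a compact convex set with a unique interior fixed point determined by (\ref{refine1}), and a Lyapunov/continuity argument pins $\boldsymbol{\phi}^{(k)}$, and with it $\widetilde{\lambda}_j^{(k)}$, to that fixed point, matching the stated stopping rule $|\widetilde{\lambda}_j^{(k+1)}-\widetilde{\lambda}_j^{(k)}|\le 0.001$.

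The hard part will be the middle step: rigorously tying the source-reliability prior (\ref{refine1}) to ascent of the likelihood potential. Unlike the EM half, (\ref{refine1}) is motivated empirically rather than derived as a maximizer, so I expect to need an explicit comparison between $\eta_{v_{pro}(j)}/(\eta_{v_{pro}(j)}+\eta_{v_{con}(j)})$ and the stationary prior $\widetilde{\lambda}_j$ singled out by $\partial\mathcal{L}/\partial\phi_j=0$---either showing the two coincide under the model's symmetry assumptions, or bounding their discrepancy so the update still increases $\mathcal{L}$ (equivalently, establishing a contraction factor $<1$ for $T$ on each selection cell, which would let me invoke Banach's theorem directly and bypass the potential altogether).
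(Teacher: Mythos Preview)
Your approach---treating the joint log-likelihood $\log L(\theta,\mu;X,Z)$ with $\mu=\{\phi_j\}$ as a bounded potential and showing it is non-decreasing across outer iterations by splitting each increment into an EM ascent in $\theta$ and a prior-refinement ascent in $\mu$---is exactly the paper's argument. The paper's proof is a short chain of inequalities (Jensen plus EM monotonicity) in which your ``hard part'' appears as their step~(b), justified only by the one-line remark that division-querying ``refines $\mu^{(k)}$''; you have correctly isolated the step that carries the real weight, and your further distinction between convergence of the potential and convergence of the iterates (handled via the finiteness of source assignments) goes beyond what the paper supplies.
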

\begin{proof}
Denote $\mu\!=\!\{\phi_j|j\in C\}$. Given users' opinions $X$ and the set of latent variables $Z$, the log-likelihood function of parameters sets $\theta$ and $\mu$, i.e., $\log\left[L(\theta,\mu;X,Z)\right]$,
in the $(k+1)$-th iteration of SourceCR, satisfies
 \begin{equation*}\small
 \setlength{\abovedisplayskip}{3pt}
\setlength{\belowdisplayskip}{3pt}
\begin{aligned}	
&\log\!\left[L(\theta^{(k+1)},\mu^{(k+1)};X,Z)\right]\\
=&\sum_{j\in C}\log \left[\!z_j\lambda^{(k)}_j\frac{P(X_j,z_j;\theta^{(k\!+\!1)},\mu^{(k+1)})}{\lambda^{(k)}_j}\right]\\
\overset{(a)}{\geq}&\sum_{j\in C} z_j\lambda^{(k)}_j\log\left[\frac{P\!(X_j,z_j;\theta^{(k+1)},\mu^{(k+1)})}{\lambda^{(k)}_j}\right]\\
\overset{(b)}{\geq}&\sum_{j\in C}z_j\lambda^{(k)}_j\log\left[\frac{P\!(X_j,z_j;\theta^{(k+1)},\mu^{(k)})}{\lambda^{(k)}_j}\right]\\
\overset{(c)}{\geq}&\sum_{j\in C}z_j\lambda^{(k)}_j\log\left[\frac{P\!(X_j,z_j;\theta^{(k)},\mu^{(k)})}{\lambda^{(k)}_j}\right]\\
=&\log\left[L(\theta^{(k)},\mu^{(k)};X,Z)\right],
\end{aligned}
\end{equation*}
where (a) holds according to Jensen's inequality and (c) is derived by the convergence of EM algorithm in the credibility-reliability training module. The core step is (b), which holds because SourceCR uses the parameters obtained in the $(k+1)$-th credibility-reliability training to refine $\mu^{(k)}$ by performing the division-querying module. From the above, the log-function $\log\left[L(\theta,\mu;X,Z)\right]$ is non-decreasing with the number of iterations. And thus the convergence of SourceCR is proven, and we complete the proof.
\end{proof}

\setlength{\intextsep}{-0.1mm}
\setlength{\textfloatsep}{-0.1mm}
\begin{algorithm}[t]
\caption{SourceCR Framework}
\begin{footnotesize}
\label{SourceCR}
\KwIn{Network $G$; Users set $V$; Claims set $C$; Users' opinions $X$;}
\KwOut{Credibility of claim $j$: $\hat{\lambda}_j$; Reliability of user $i$: $\hat{\eta}_i$; Pros source of claim $j$: $\hat{v}_{pro}\!(j)$; Cons source of claim $j$: $\hat{v}_{con}\!(j)$
}

\vspace{2mm}
Initialize $\phi^{(0)}_j$ by randomly generating values in [0, 1].\\
\Repeat{$\widetilde{\lambda}^{(k+1)}_j-\widetilde{\lambda}^{(k)}_j<0.001$}
{\textbf{(Credibility-Reliability Training)}\\
$\widetilde{\lambda}^{(k)}_j$, $\widetilde{\eta}^{(k)}_i$ $\leftarrow$ Algorithm 1 $\left(\phi^{(k)}_j\right)$\\
\textbf{(Division-Querying)}\\
$\widetilde{V}^{(k)}\leftarrow$ Algorithm 2 $\left(\widetilde{\lambda}^{(k)}_j, \widetilde{\eta}^{(k)}_i\right)$\\
\textbf{(Iteration)}\\
Update $\phi^{(k+1)}_j$ based on $\widetilde{V}^{(k)}$ by equation (\ref{refine1})\\
}
\end{footnotesize}
\end{algorithm}

Theorem 3 reports the computational complexity of our proposed framework.
\begin{theorem}
The computational complexity of SourceCR is $\mathcal{O}\left(\left((N\!+\!M)T_1+NM+\frac{NK^2}{r}\right)T_2\right)$,
where $T_1$ is the average number of iterations in the credibility-reliability training module and $T_2$ is the average iteration time of SourceCR.
\end{theorem}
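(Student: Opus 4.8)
The plan is to bound the running time of Algorithm \ref{SourceCR} by decomposing it along its three constituent operations and then multiplying by the number of outer iterations. Since SourceCR repeats the \emph{Credibility-Reliability Training}, the \emph{Division-Querying}, and the prior-refinement step until the convergence criterion is met, and since this repetition runs on average $T_2$ times, the overall cost equals $T_2$ times the per-iteration cost. I would therefore first establish the per-outer-iteration bound $\mathcal{O}\big((N+M)T_1 + NM + NK^{2}/r\big)$ and then lift it to the stated result by the $T_2$ factor.

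First I would analyze Algorithm \ref{em} (the EM training). Each EM pass consists of one E-Step that evaluates equation (\ref{E}) to update the $N$ claim credibilities and one M-Step that evaluates equation (\ref{M}) to update the $M$ user reliabilities, so a single pass performs $\mathcal{O}(N+M)$ parameter updates; since the EM loop runs on average $T_1$ times, this update work totals $\mathcal{O}((N+M)T_1)$. Separately, reading the opinion matrix $X$ and forming, for each user, the index sets $C_i^{1}$ and $C_i^{-1}$ that feed the products in (\ref{E}) and the sums in (\ref{M}) touches every user--claim pair once, which I would charge to the $\mathcal{O}(NM)$ term.

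Next I would bound Algorithm \ref{detection} (division-querying), handled claim by claim. For a fixed claim, the \emph{Network Division} step classifies each of the $M$ nodes into the pros or cons subnetwork, costing $\mathcal{O}(M)$, so that aggregated over $N$ claims it is again absorbed into the $\mathcal{O}(NM)$ term. The querying selects $\mathcal{O}(K/r)$ respondents and issues $r$ rounds to each, but the decisive cost is the likelihood maximization $\arg\max_{v} P(\overline{G}\,|\,v=v^{*})$ over the filtered candidate set $T_{id}\cup T_{dir}$: there are $\mathcal{O}(K/r)$ candidate sources, and evaluating the rumor-centrality-based likelihood for each candidate costs $\mathcal{O}(K)$, yielding $\mathcal{O}(K^{2}/r)$ per claim and $\mathcal{O}(NK^{2}/r)$ across all claims. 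The prior refinement via equation (\ref{refine1}) is $\mathcal{O}(N)$ and is dominated. Summing the three contributions gives a per-outer-iteration cost of $\mathcal{O}\big((N+M)T_1 + NM + NK^{2}/r\big)$, and multiplying by the average number $T_2$ of outer iterations yields the claimed bound.

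The main obstacle I anticipate is the tight accounting of the source-detection step: one must justify both that the number of candidate sources is $\mathcal{O}(K/r)$ (a consequence of the budget-$K$, $r$-round selection rule, which chooses $\mathcal{O}(K/r)$ respondents) and that each likelihood evaluation is $\mathcal{O}(K)$, so that their product is $\mathcal{O}(K^{2}/r)$ rather than a larger power. Care is also needed to attribute the network-division and opinion-matrix reading consistently to the single $\mathcal{O}(NM)$ term, to keep the EM accounting at the level of $\mathcal{O}(N+M)$ updates per pass, and to treat $T_1$ and $T_2$ uniformly as expected iteration counts so that the final product $\big((N+M)T_1 + NM + NK^{2}/r\big)T_2$ is well defined.
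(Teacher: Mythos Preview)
Your proposal is correct and follows essentially the same decomposition as the paper: bound Algorithm~\ref{em} by $\mathcal{O}((N+M)T_1)$, bound Algorithm~\ref{detection} by $\mathcal{O}(NM+NK^2/r)$, note the refinement step is negligible, and multiply the sum by $T_2$. The only cosmetic difference is in where the sub-terms are charged: the paper attributes the entire $\mathcal{O}(NM)$ cost to the network-division traversal inside Algorithm~\ref{detection} (not to reading the opinion matrix), and it attributes the $\mathcal{O}(K^2/r)$ per-claim cost to the answer-filtering step that builds $T_{id}$ and $T_{dir}$ (scanning the $K$ collected answers for each of the $K/r$ respondents), rather than to the likelihood maximization you highlight.
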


\begin{proof}
As shown in Algorithm \ref{SourceCR}, Algorithms \ref{em} and \ref{detection} are both important components of SourceCR.
Hence, we respectively analyze the complexity of Algorithms \ref{em} and \ref{detection}.

In Algorithm 1, for each iteration, it includes two ``for'' loops, where the first one traverses all claims with loop times $N$ and the second one traverses all users with loop times $M$. Let $T_1$ denote the number of iterations in Algorithm 1, which depends on the convergence condition of EM algorithm. Thus, the complexity of Algorithm 1 is $\mathcal{O}\big((N+M)T_1\big)$.

In Algorithm 2, there is an outer ``for'' loop, which traverses all claims with loop times $N$. For each claim, there are two inner parallel ``for'' loops. The first one traverses all the infected nodes in the network when performing network division with loop times bounded by $\mathcal{O}(M)$. The second one traverses all the answers provided by $\frac{K}{r}$ respondents to filter nodes with loop times bounded by $\mathcal{O}(\frac{K^2}{r})$.
Thus, the total complexity is $\mathcal{O}(NM+\frac{NK^2}{r})$.

We then analyze operations in Algorithm 3 line by line. Line 1 makes the initialization, thus the corresponding complexity is $\Theta(1)$. Operations in Line 2 to Line 9 are the iterative refinements of SourceCR: (i) Lines 3-4 conduct Algorithm 1 with complexity $\mathcal{O}\big((N+M)T_1\big)$; (ii) Lines 5-6 perform Algorithm 2 with complexity $\mathcal{O}(NM+\frac{NK^2}{r})$; (iii) Lines 7-8 are the refinement rules with complexity $\Theta(1)$. Let $T_2$ denote the number of iterations in SourceCR, which is determined by the convergence criterion. Combining the above three parts together, we complete the proof.
\end{proof}
\vspace{-1mm}
\section{Experiments}
\vspace{-1mm}
In this section,
we evaluate our proposed framework in simulation, an emulated scenario in a real network as well as a real-world journalism application.
\vspace{-1mm}
\subsection{Description on Datasets}
\vspace{-1mm}
We adopt a real life dataset named \textbf{gemsec-Deezer} \cite{rozemberczki2019gemsec} for our simulation study, and a typical stance classification dataset named \textbf{Emergent} \cite{ferreira2016emergent} for our real-world application. The basic descriptions of two datasets and the network construction of the simulation case are listed as follows:

\emph{1) Simulation Dataset:}
{gemsec-Deezer} characterizes social friendship from three European countries,
which is downloaded from an open dataset website SNAP. The dataset consists of 143884 nodes and 846915 edges.
We generate 2000 claims to observe, of which 1000 claims are true and 1000 are false. 
For each claim, we realize joint spreading process with probability 0.6 by randomly choose a \emph{pros source} and a \emph{cons source} from all nodes.
The opinions of users are generated based on the states of nodes after the joint spreading process. 

\emph{2) Journalism Dataset:} {Emergent} is derived from a digital journalism project, which contains the
propagation records of news articles regarding 300 claims on the web.
We extract 180 claims with 468 associated news articles (which are equivalent to  users in our model). Among them, 102 claims are labeled as true and 78 claims are labeled as false. Meanwhile, the opinions of news articles are labeled as \emph{for} and \emph{against}, which correspond to \emph{pons} and \emph{cons} in our model, respectively.

\begin{figure}
\setlength{\abovecaptionskip}{-0.52mm}   
\setlength{\belowcaptionskip}{-15mm}
  \centering
  \subfigure[gemsec-Deezer]{\includegraphics[width=1.57in]{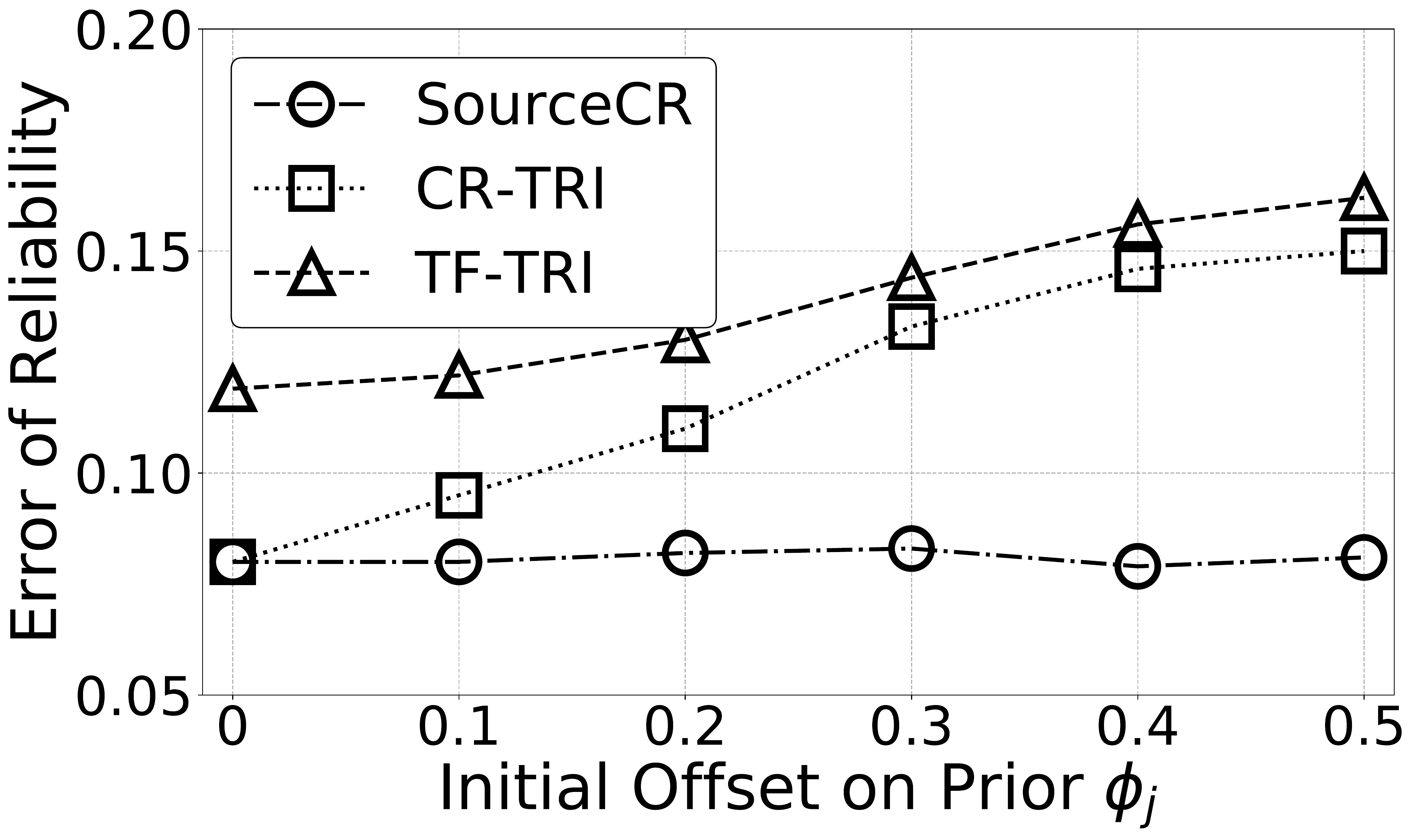}}
\vspace{-1mm}
  \subfigure[Emergent]{\includegraphics[width=1.57in]{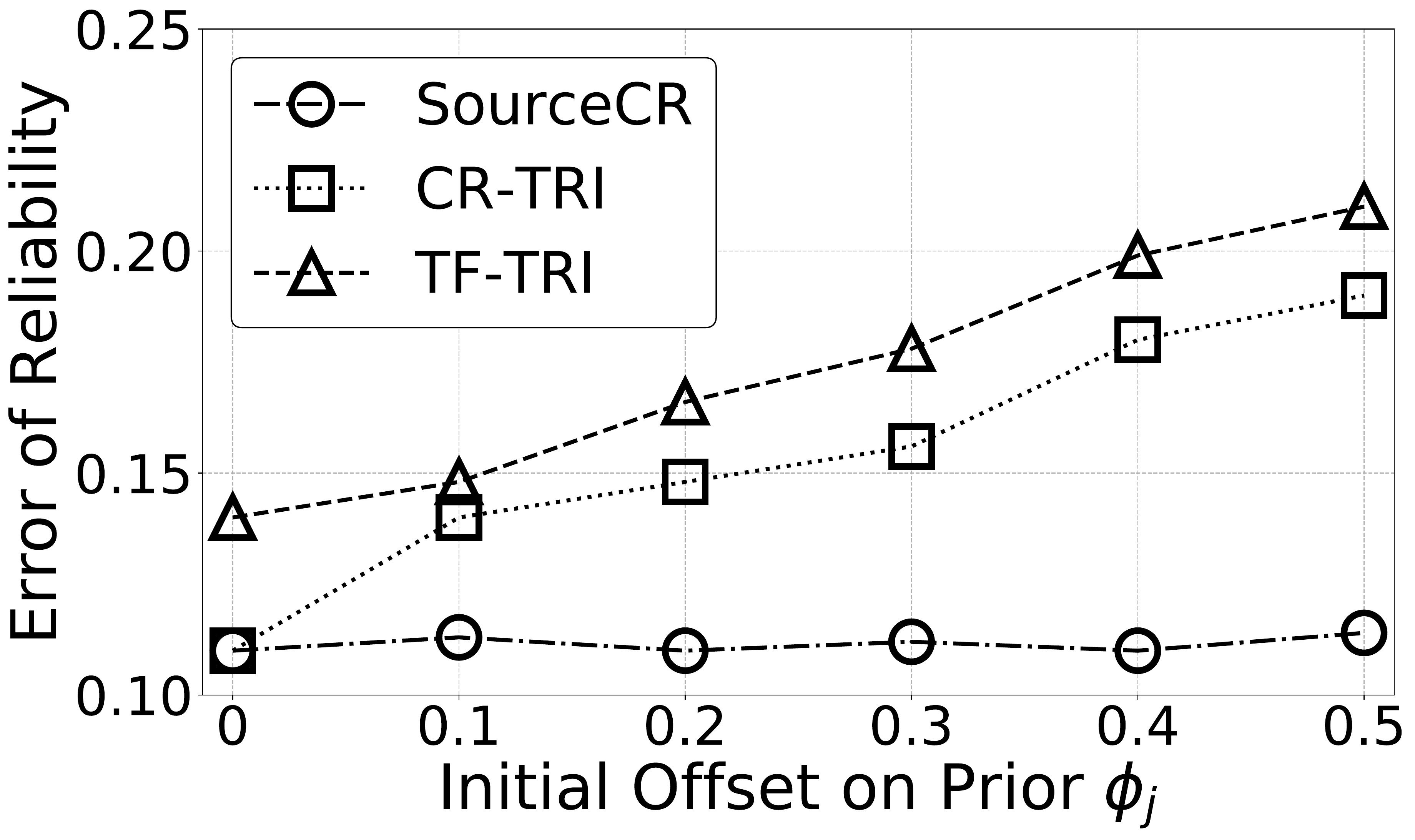}}
\vspace{-1mm}
  \caption{Error of user reliability versus initial offset on prior $\phi_j$}
\vspace{-5.5mm}
  \label{fig1}
\end{figure}

\setlength{\textfloatsep}{-1.9mm}
\begin{figure}
\setlength{\abovecaptionskip}{-0.48mm}   
\setlength{\belowcaptionskip}{-0.1mm}
  \centering
  \subfigure[gemsec-Deezer]{\includegraphics[width=1.57in]{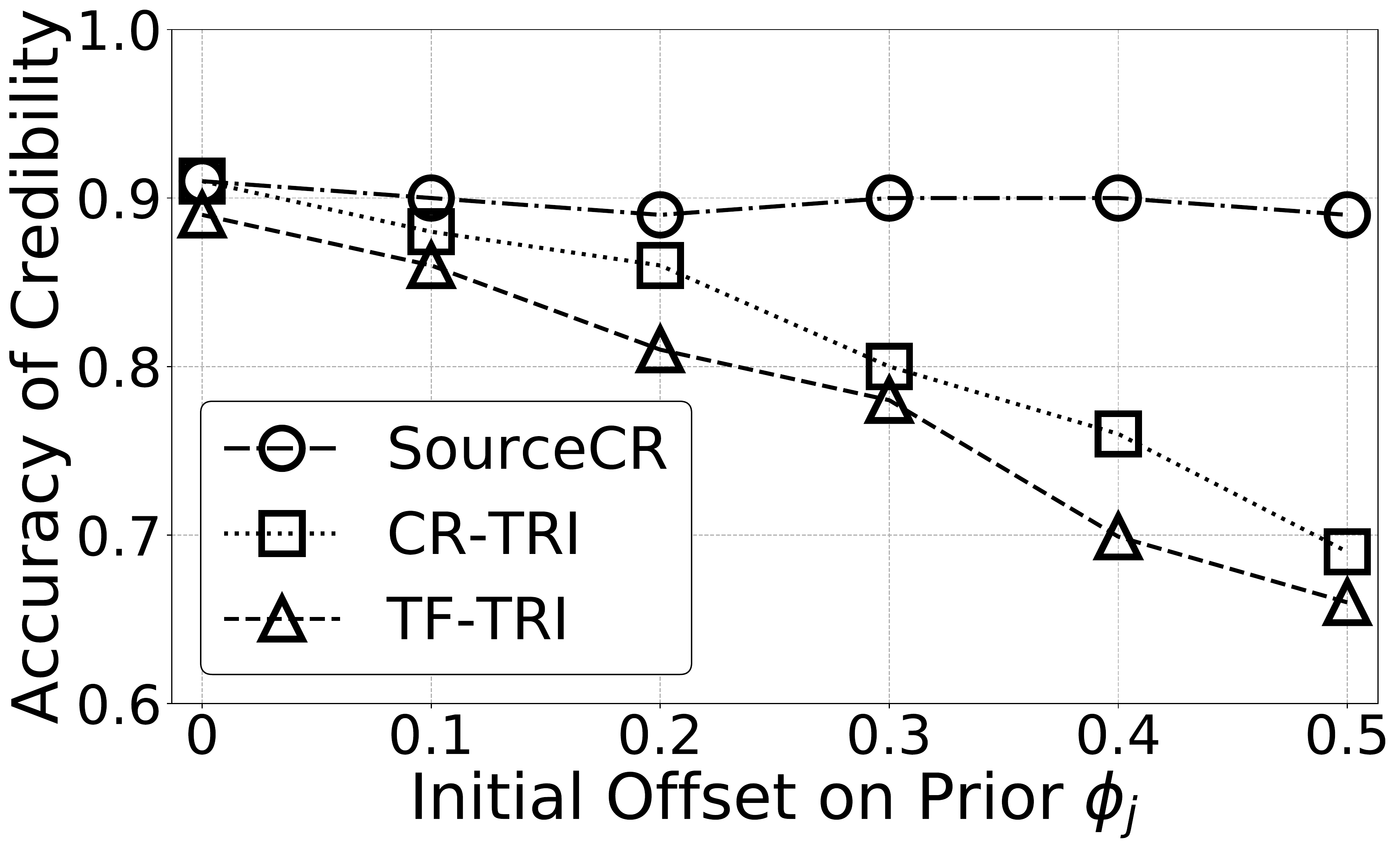}}
\vspace{-1mm}
  \subfigure[Emergent]{\includegraphics[width=1.57in]{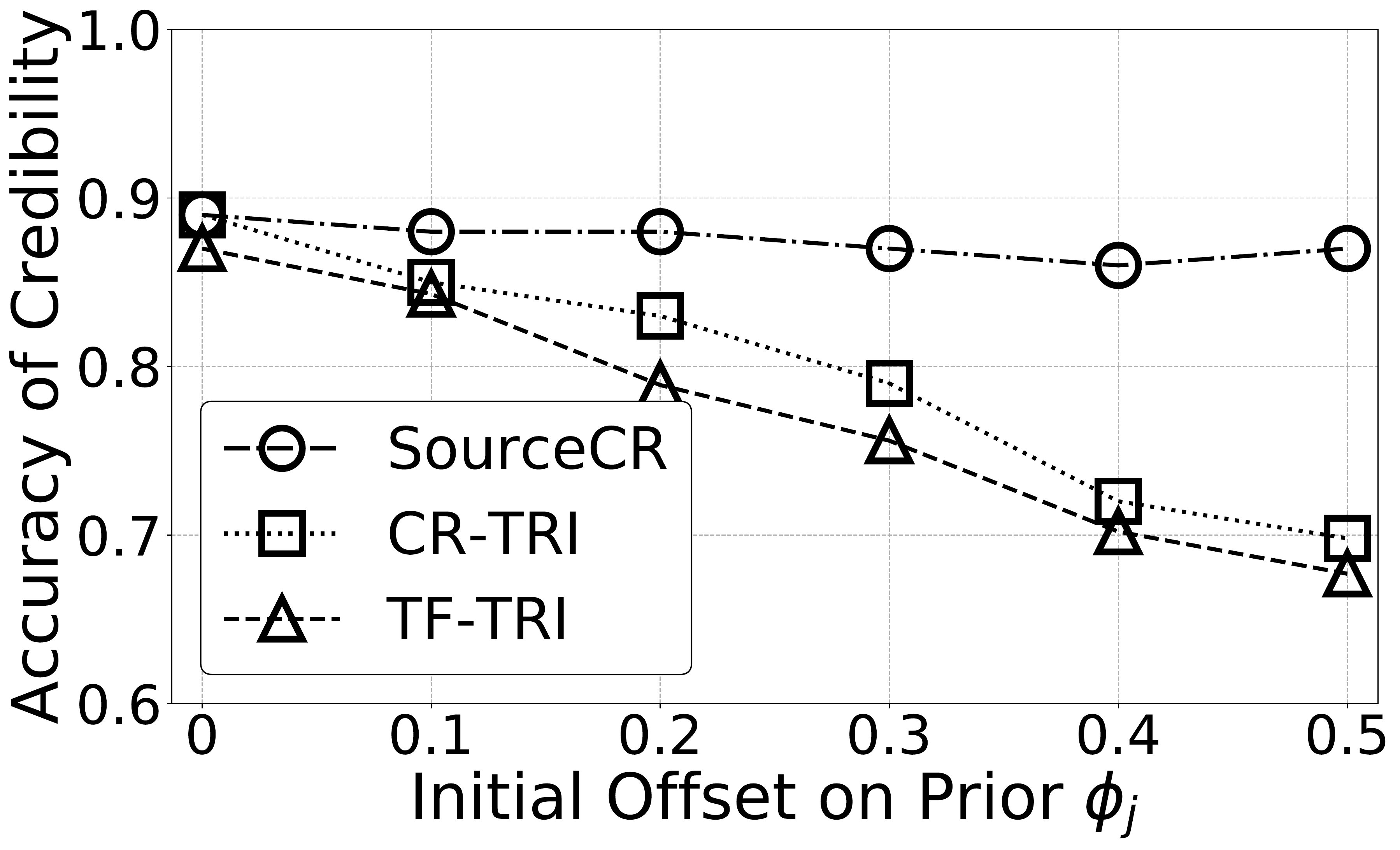}}
\vspace{-1mm}
  \caption{Accuracy of claim credibility versus initial offset on prior $\phi_j$}
  \label{fig1}
\end{figure}

\vspace{-1mm}
\subsection{Experimental Settings}
\vspace{-1mm}
\emph{1) Performance Measures:} We evaluate the estimated user reliability, estimated claim credibility and inferred sources via three measures, i.e., \emph{error of reliability}, \emph{accuracy of credibility} and \emph{source detection rate}.
\begin{enumerate}[-]
\vspace{-0.3mm}
  \item The \emph{error of reliability} \cite{wang2012truth} is calculated by the average of the absolute value of the difference between the estimated reliability and the real reliability for all users in the network.
  \item The \emph{accuracy of credibility} \cite{shu2017fake} is derived by the proportion of claims that is correctly estimated as true or false.
  \item The \emph{source detection rate} \cite{wang2017multiple} is reported as the percentage of inferred real sources (irrespective of the \emph{pros source} or \emph{cons source}) in terms of all claims in the network.
\end{enumerate}
\vspace{-0.3mm}
Note that the ground truth reliability of each user is given by the proportion of the claims
regarding which this user holds the correct opinion among all claims.
Besides, the above measures are derived by averaging over all obtained results through
performing algorithms 1000 times in our experiments.

\emph{2) Baseline Schemes:} To evaluate the performance of our proposed framework, we include four additional algorithms to make the comparison. We briefly introduce them as below.
\begin{enumerate}[$\bullet$]
  \item \emph{Credibility-reliability training} for truth/rumor inference \emph{(CR-TRI)}: An algorithm that works by performing the credibility-reliability training proposed in Section \uppercase\expandafter{\romannumeral4}-A.
  \item \emph{TruthFinder algorithm} for truth/rumor inference \emph{(TF-TRI)} \cite{yin2008truth}: An algorithm to find truth from conflicting information
      that can use the input of our credibility-reliability training.
  \item \emph{Querying} for source inference \emph{(Q-SI)}: A source inference algorithm that works by performing the network division and querying process proposed in Section \uppercase\expandafter{\romannumeral4}-B.
  \item \emph{MVNA algorithm} for source inference \emph{(MVNA-SI)} \cite{choi2018necessary}: An algorithm to infer sources by a kind of querying methods where each user has the same reliability.
\end{enumerate}
\vspace{-1mm}

\begin{figure}
\setlength{\abovecaptionskip}{-0.48mm}   
\setlength{\belowcaptionskip}{-0.1mm}
  \centering
  \subfigure[gemsec-Deezer]{\includegraphics[width=1.57in]{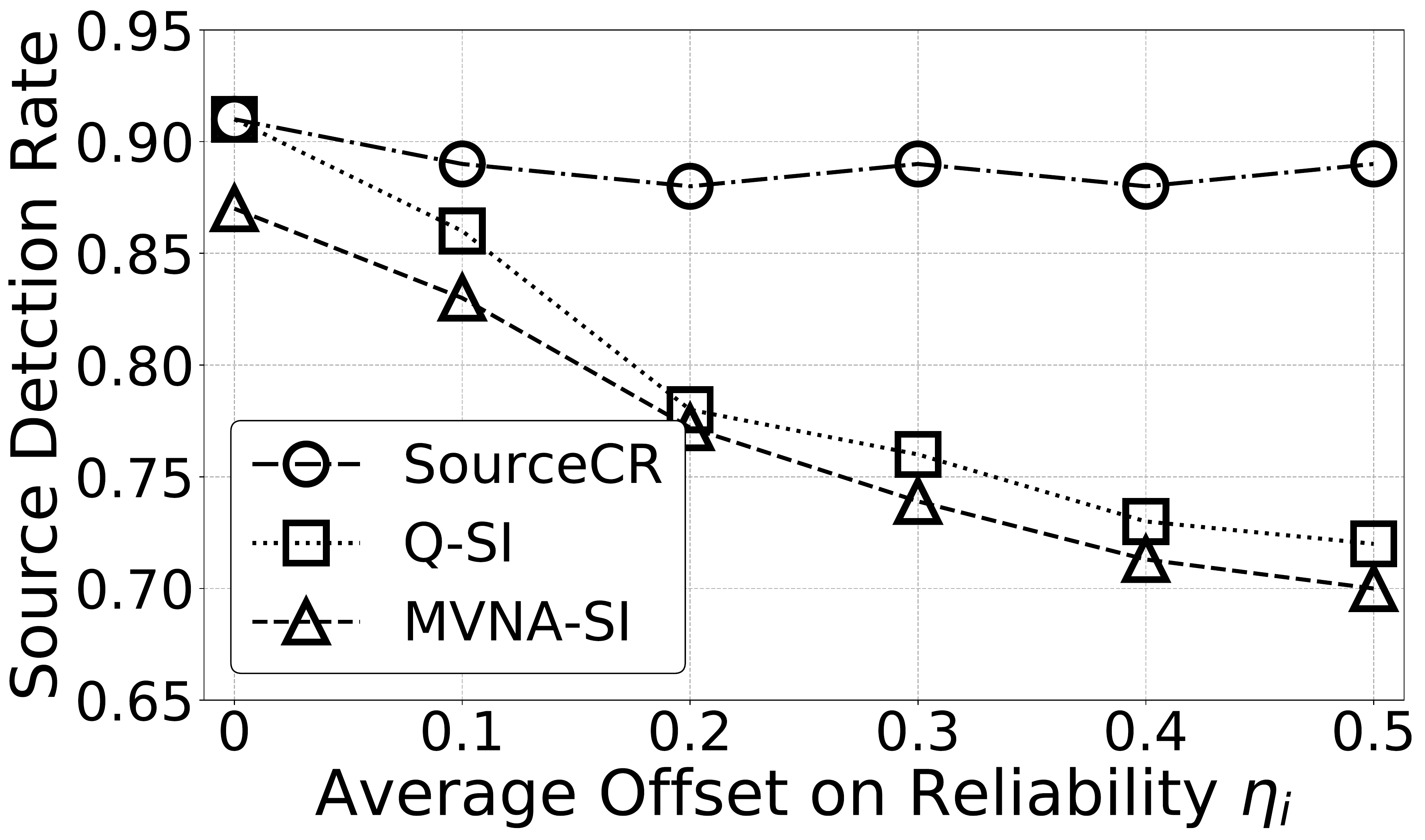}}
\vspace{-1mm}
  \subfigure[Emergent]{\includegraphics[width=1.57in]{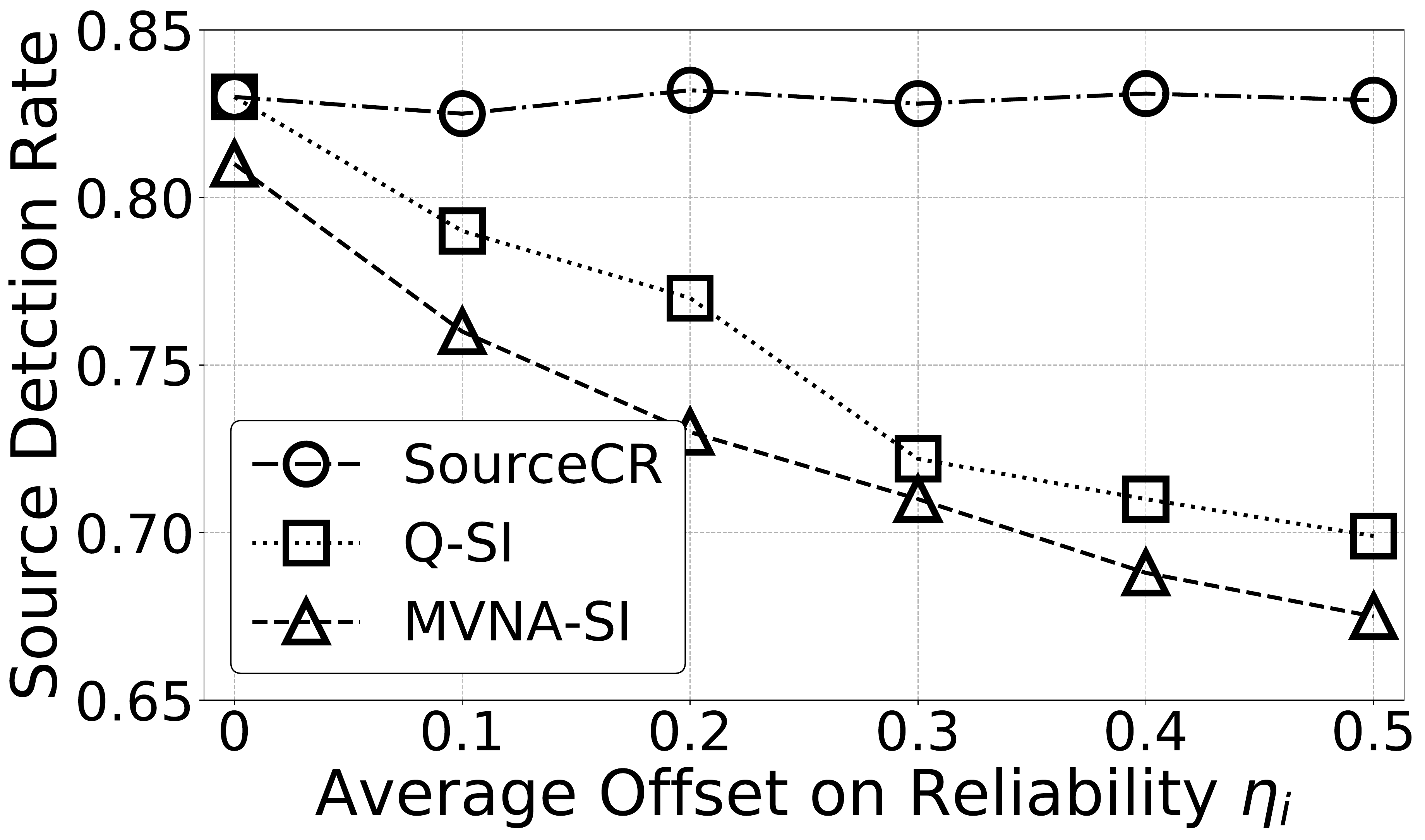}}
\vspace{-1mm}
  \caption{Source detection rate versus average offset on user reliability}
\vspace{-5mm}
  \label{fig1}
\end{figure}

\begin{figure}
\setlength{\abovecaptionskip}{-0.48mm}   
\setlength{\belowcaptionskip}{-0.1mm}
  \centering
  \subfigure[gemsec-Deezer]{\includegraphics[width=1.57in]{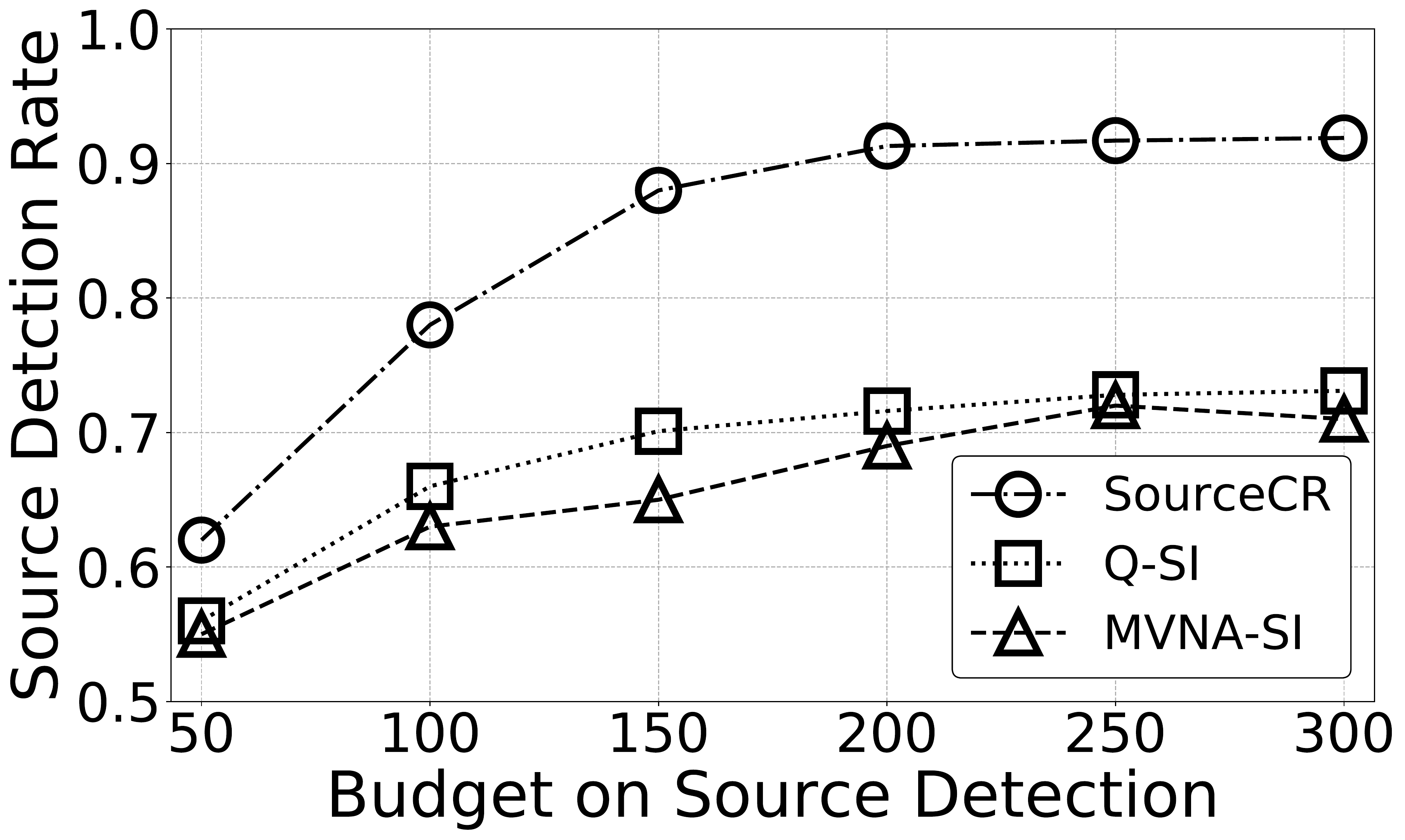}}
\vspace{-1mm}
  \subfigure[Emergent]{\includegraphics[width=1.57in]{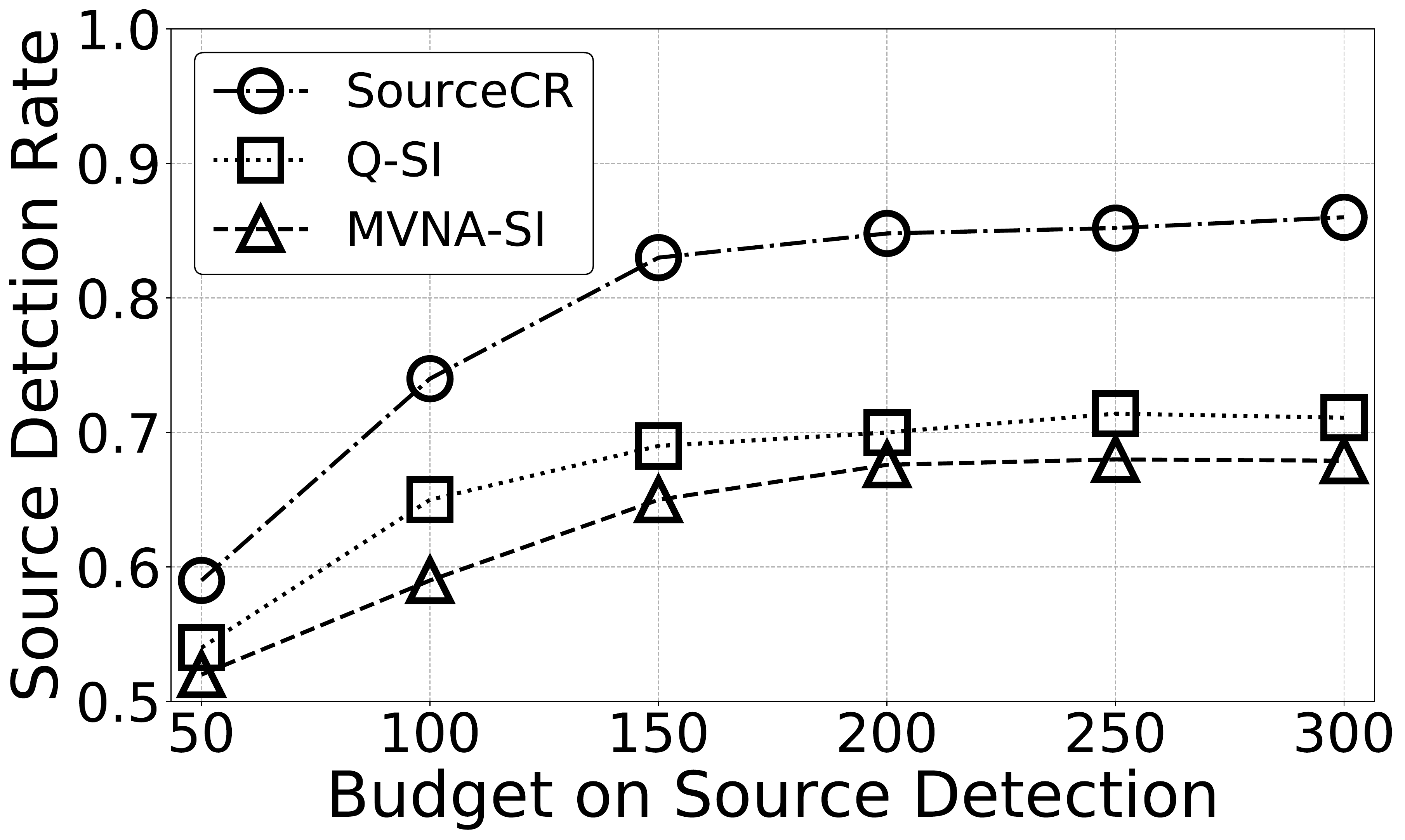}}
\vspace{-1mm}
  \caption{Source detection rate as varying budgets}
  \label{fig1}
\end{figure}
\vspace{-1mm}
\subsection{Quantitative Results}
\vspace{-1mm}
\emph{1) The impact of prior $\phi_j$ on truth/rumor inference.}
We conduct experiments with varying offsets on prior $\phi_j$ from 0 to 0.5. Experiment results are 
plotted in Figures 4 and 5.
We observe that the performance of SourceCR is stable when the offset on prior $\phi_j$ varies, while the performance of other baselines degrades significantly as the offset on prior $\phi_j$ increases. The reason lies in that our SourceCR considers the prior $\phi_j$ as part of estimation parameters and refines it iteratively by source inference. However, other baselines depend largely on the prior $\phi_j$ to determine the number of true and false claims. These results verifies that source inference can greatly enhance truth/rumor inference and
guarantee its robustness when the prior $\phi_j$ is not available in advance.

\emph{2) The impact of user reliability on source inference.} We evaluate the source detection rate with varying average offsets on user reliability from 0 to 0.5. Figure 6 shows the results.
As the average offset on user reliability increases, we observe that
the source detection rate of SourceCR is steady and that of other baselines declines sharply. 
Particularly, the accuracy of credibility gain achieves 35\% in the gemsec-Deezer dataset with the offset 0.5.
The reason is that truth/rumor inference provides important auxiliary information for SourceCR to iteratively modify the estimated user reliability, while other baselines directly apply it without modification.
Results show that truth/rumor inference can improve the source inference performance when the user reliability is unknown.

\emph{3) The impact of budget on source inference.}
We obtain the source detection rate when the budget varies from 50 to 300. The results are shown in Figure 7. We check that when the budget goes to 200, the performance tends to be stable for SourceCR.
This is because it is sufficiently large to ensure the source inference performance, while the budget is required to reach nearly 250 for other baselines to have the stable source detection rate. The gap between them reaches up to 29\% in the gemsec-Deezer dataset with 300 budgets.
The results validate that SourceCR is an efficient algorithm that consumes the relatively lower budget for querying.

\emph{4) Performance on convergence.}
To show the convergence property of SourceCR, we observe two measures, i.e., \emph{error of reliability} and \emph{accuracy of credibility}, as the number of iterations increases.
Results are provided in Figure 8 and Figure 9. We note that both two measures converge reasonably fast to stable values with less than 8 times iterations. It verifies the convergence property of SourceCR in Theorem 2, and also demonstrates the feasibility of SourceCR to large-scale networks due to its fast convergence rate.

\begin{figure}[t]
\setlength{\abovecaptionskip}{-0.48mm}   
\setlength{\belowcaptionskip}{-0.1mm}
  \centering
  \subfigure[gemsec-Deezer]{\includegraphics[width=1.53in]{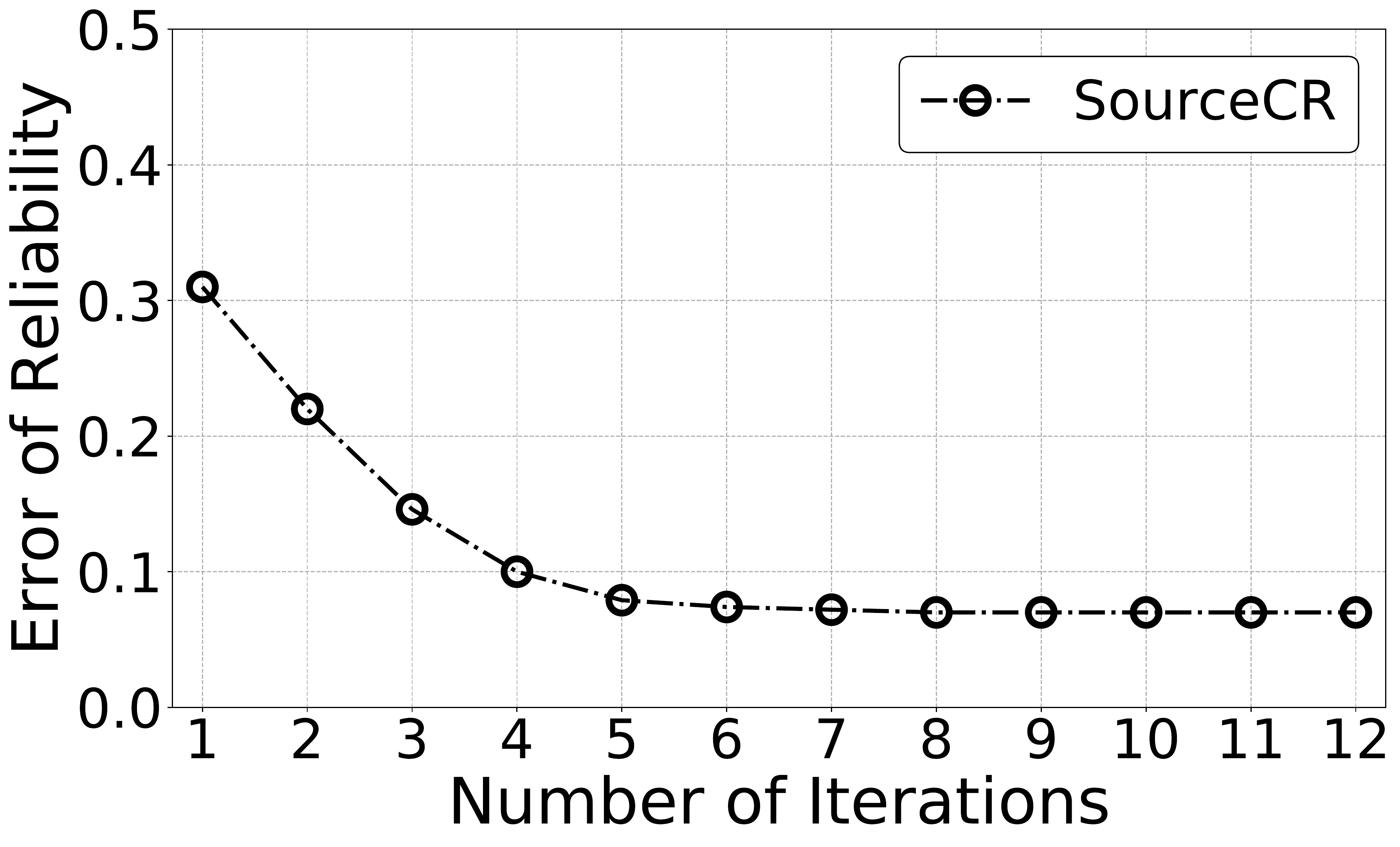}}
\vspace{-0.4mm}
  \subfigure[Emergent]{\includegraphics[width=1.53in]{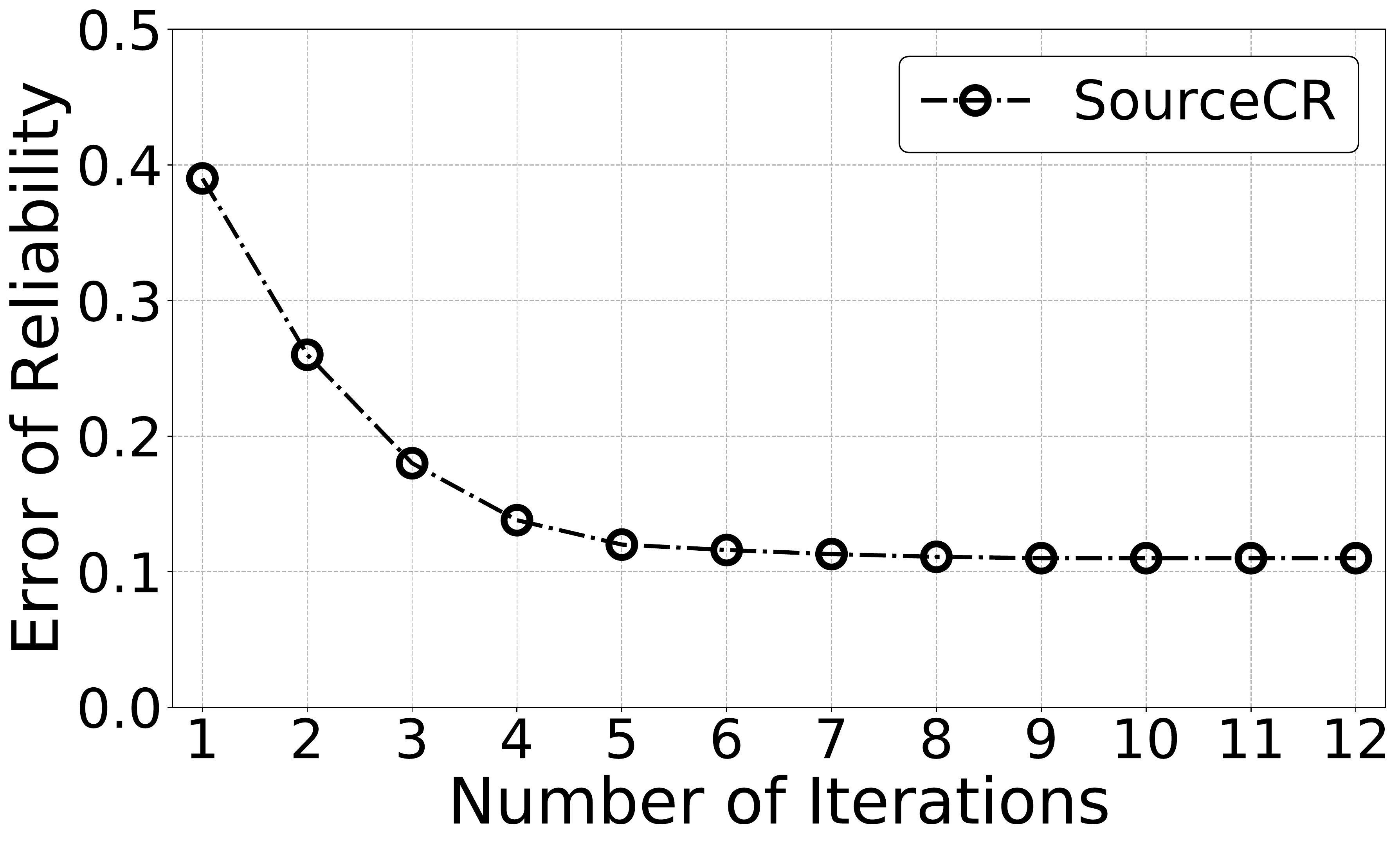}}
  \caption{Error of user reliability as varying numbers of iterations}
\vspace{-5.5mm}
  \label{fig1}
\end{figure}

\begin{figure}
\setlength{\abovecaptionskip}{-0.48mm}   
\setlength{\belowcaptionskip}{-0.1mm}
  \centering
  \subfigure[gemsec-Deezer]{\includegraphics[width=1.53in]{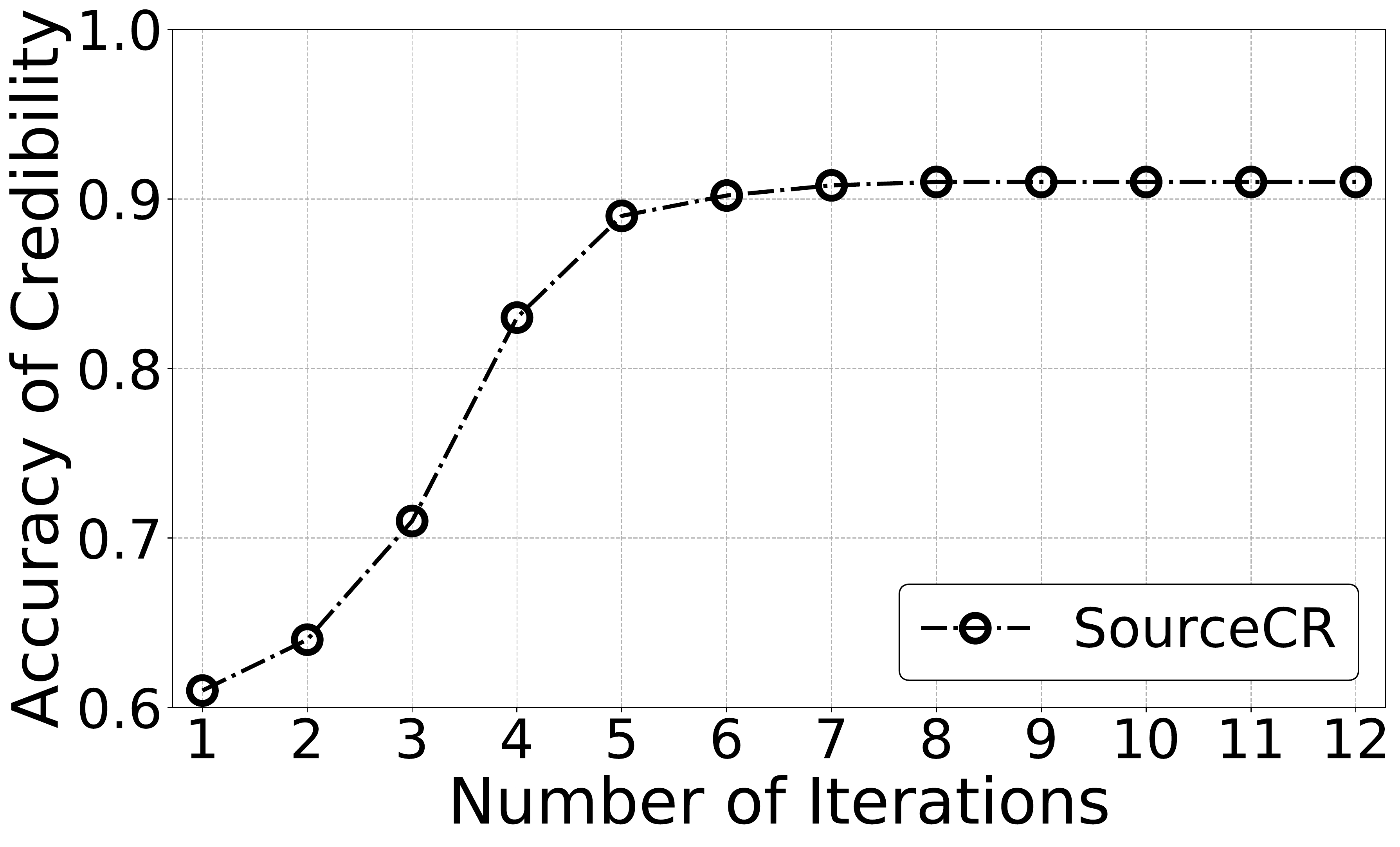}}
  \subfigure[Emergent]{\includegraphics[width=1.53in]{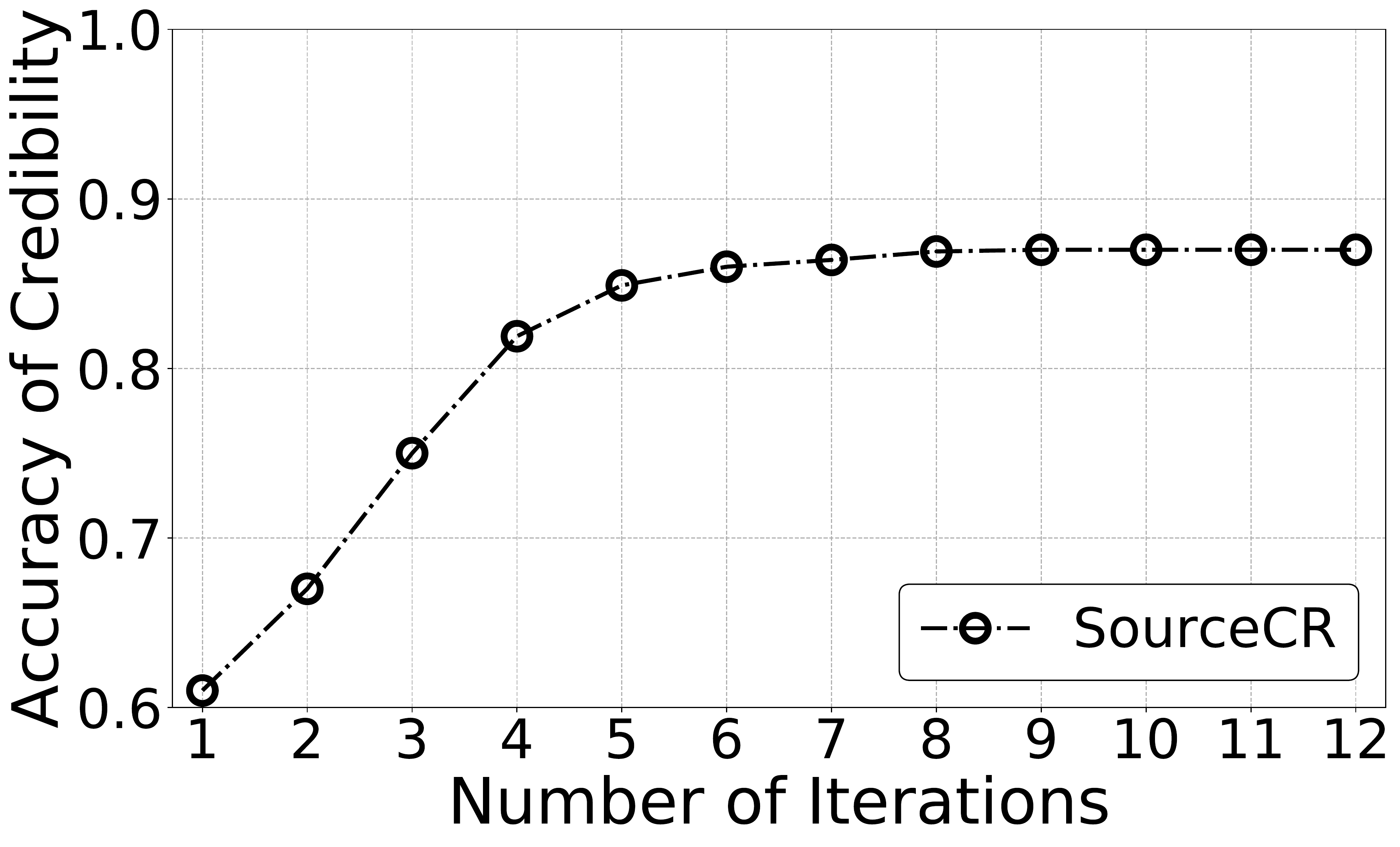}}
\vspace{-0.4mm}
  \caption{Accuracy of claim credibility as varying numbers of iterations}
\vspace{-1mm}
  \label{fig1}
\end{figure}

\begin{table}[!htbp]\small
\setlength{\abovecaptionskip}{-0.48mm}   
\setlength{\belowcaptionskip}{-0.08mm}
\caption{Running time for each claim}
\label{time}
\centering
\setlength{\tabcolsep}{0.6mm}{
\begin{tabular}{cccccc}
\toprule
\multirow{1}*{\quad}& SourceCR & CR-TRI & TF-TRI & Q-SD & MVNA-SD\\
\hline\hline
gemsec-Deezer & 14.33s & 11.22s & 11.94s &5.68s&5.11s\\
\hline
Emergent&13.97s&10.53s&10.66s&4.99s&4.86s\\
\bottomrule
\end{tabular}}
\end{table}
\emph{5) Performance on complexity.}
Computation complexity of all algorithms are summarized in Table \ref{time}, where they are evaluated by running time for each claim on a computer with configuration Intel(R) Xeon(R) CPU E5-2630 2.40GHz. Results show that the sum of running time of any one truth/rumor inference algorithm and any one source inference algorithm is larger than that of SourceCR, which is mainly caused by the slow convergence speed of CR-TRI and TF-TRI. Based on the above results, we conclude that, the proposed framework SourceCR is efficiently implemented, which reduces the complexity of both truth/rumor inference and source inference.

\section{Conclusion}
In this paper, we propose a framework named SourceCR to (i) (truth/rumor inference) identify whether a claim is a \emph{truth} or \emph{rumor}, and (ii) (source inference) find out the source for each claim. To improve their performances, our framework makes joint inference on both by iteratively conducting two modules, i.e., \emph{credibility-reliability training} module and \emph{division-querying} module. Our proposed framework is theoretically guaranteed to converge.
Meanwhile,
we propose algorithms to implement SourceCR, where the computational complexity is analyzed. 
And finally, we evaluate the performance of SourceCR by a simulation case and a real application. Results show that SourceCR outperforms other baseline algorithms that either infer truth/rumor or infer sources in both cases.

\section*{Acknowledgment}

This work was supported by National Key R\&D Program of China 2018AAA0101200, NSF China under Grant (No. 61822206, 61832013, 61960206002, 61829201, 61532012) and US NSF through grant CNS-1909048.

\vspace{-1mm}
\appendix
\vspace{-1mm}
\subsection{Proof of Result 1}
\vspace{-1mm}
Firstly, we denote the prior that user $i$ thinks claim $j$ is true as 
$\xi^1_{ij}$, i.e., $P(x_{ij}\!=\!1)\!=\!\xi^1_{ij}$.
Then, let us denote the probability that user $i$ thinks claim $j$ is \emph{true} (resp. \emph{false})
when the claim is a \emph{truth} (resp. a \emph{rumor}) by $p^{(1,1)}_{ij}$ (resp. $p^{(-1,-1)}_{ij}$). Similarly, we denote the probability that user $i$ thinks claim $j$ is \emph{true} (resp. \emph{false})
when the claim is a \emph{rumor} (resp. a \emph{truth}) 
by $p^{(1,-1)}_{ij}$ (resp. $p^{(-1,1)}_{ij}$). The probabilities are formally represented as 
\begin{equation*}\small
\setlength{\abovedisplayskip}{3pt}
\setlength{\belowdisplayskip}{3pt}
    \begin{aligned}	
    p^{(1,1)}_{ij}\!=\!P(x_{ij}\!=\!1|z_j\!=\!1), \; p^{(-1,-1)}_{ij}\!=\!P(x_{ij}\!=\!-1|z_j\!=\!-1),\\
    p^{(1,-1)}_{ij}\!=\!P(x_{ij}\!=\!1|z_j\!=\!-1), \; p^{(-1,1)}_{ij}\!=\!P(x_{ij}\!=\!-1|z_j\!=\!1).
    \end{aligned}
    \end{equation*}
Then, we have the following relationship by Bayesian theorem:
\begin{equation*}\small
\setlength{\abovedisplayskip}{1pt}
\setlength{\belowdisplayskip}{1pt}
    \begin{aligned}	
    p^{(1,1)}_{ij}\!&=\!\frac{P(x_{ij}\!=\!1)P(z_j\!=\!1|x_{ij}\!=\!1)}{P(z_j\!=\!1)}\!=\!\frac{\xi^1_{ij}\cdot \eta^{1}_i}{\phi_j},\\
    p^{(-1,-1)}_{ij}\!&=\!\frac{P(x_{ij}\!=\!-1)P(z_j\!=\!-1|x_{ij}\!=\!-1)}{P(z_j\!=\!-1)}\!=\!\frac{(1\!-\!\xi^{1}_{ij})\cdot \eta^{-1}_i}{1-\phi_j},\\
    p^{(1,-1)}_{ij}\!&=\!\frac{P(x_{ij}\!=\!1)P(z_j\!=\!-1|x_{ij}\!=\!1)}{P(z_j\!=\!-1)}\!=\!\frac{\xi^1_{ij}\cdot (1\!-\!\eta^{1}_i)}{1-\phi_j},\\
    p^{(-1,1)}_{ij}\!&=\!\frac{P(x_{ij}\!=\!-1)P(z_j\!=\!1|x_{ij}\!=\!-1)}{P(z_j\!=\!1)}\!=\!\frac{(1\!-\!\xi^{1}_{ij})\cdot (1\!-\!\eta^{-1}_i)}{\phi_j}.\\
    \end{aligned}
    \end{equation*}
Now, we prove equation (\ref{E}) and equation (\ref{M}), respectively.

\emph{1) E-Step:} 
We first compute two probabilities $P(X_j|z_j\!=\!1)$ and $P(X_j|z_j\!=\!-1)$ as follows, which are denoted by $p^1_j$ and $p^{-1}_j$, respectively. Based on the estimation in \cite{estimation}, we have
\begin{equation*}\small
\setlength{\abovedisplayskip}{1pt}
\setlength{\belowdisplayskip}{1pt}
    \begin{aligned}	
    p^1_j
    =P(X_j|z_j=1)
    \!=\!\prod_{i\in V}\left({p^{(1,1)}_{ij}}\right)^{\frac{1\!+\!x_{ij}}{2}}\left({p^{(\!-\!1,1)}_{ij}}\right)^{\frac{1\!-\!x_{ij}}{2}}p^M,
    \end{aligned}
    \end{equation*}
\begin{equation*}\small
\setlength{\abovedisplayskip}{1pt}
\setlength{\belowdisplayskip}{1pt}
    \begin{aligned}	
    p^{-1}_j
    =P(X_j|Z_j=-1)
    \!=\!\prod_{i\in V}\left({p^{(1,\!-\!1)}_{ij}}\right)^{\frac{1\!+\!x_{ij}}{2}}\left({p^{(\!-\!1,\!-\!1)}_{ij}}\right)^{\frac{1\!-\!x_{ij}}{2}}p^M.\\
    \end{aligned}
    \end{equation*}
Then, we can calculate the conditional distribution of the latent variable, i.e., $\lambda_j^{(t)}$ as
\begin{equation*}\small
\setlength{\abovedisplayskip}{1pt}
\setlength{\belowdisplayskip}{1pt}
    \begin{aligned}	
    \lambda_j^{(t)}\!=\frac{1}{1+\left(\frac{\phi_j}{1-\phi_j}\right)^{|V|}\prod\limits_{i\in V}\Big(\frac{1-\eta^{1}_i}{\eta^{1}_i}\Big)^{\frac{1+x_{ij}}{2}}\Big(\frac{\eta^{-1}_i}{1-\eta^{-1}_i}\Big)^{\frac{1-x_{ij}}{2}}p^M}.\\
    \end{aligned}
    \end{equation*}

\emph{2) M Step:} Given users' opinions, the expectation of log-likelihood $Q(\theta|\theta^{(t)})$ (abbreviated by $Q$) can be calculated.
To maximize 
$Q$, we let $\frac{\partial Q}{\partial \eta_i^{1}}\!\!\!=\!\!\!0$ and $\frac{\partial Q}{\partial \eta_i^{-1}}\!\!\!=\!\!\!0$.
Then, 
$\theta^{(t\!+\!1)}\!\!\!=\!\!\!\{\eta^{1(t+1)}_i, \eta^{-1(t+1)}_i\}$ can be updated as shown in equation (\ref{M}).
\vspace{-3.5mm}
\subsection{Proof of Theorem 1}
\vspace{-0.5mm}
For the subnetwork $\overline{G}$, denote its real source as $v^{*}$ and the estimated source by our framework as $\hat{v}$.
Our proof mainly consists of the following three steps.

\emph{Step 1: We discuss the probability that the real source $v^{*}$ belongs to the set of selected respondents $S$ in Lemma 1.}
\vspace{-1mm}
\begin{lemma}
Under d-regular tree, the probability that the real source $v^*$ belongs to the set of selected respondents $S$ satisfies
\begin{equation*}\small
 \setlength{\abovedisplayskip}{3pt}
\setlength{\belowdisplayskip}{3pt}
\begin{aligned}	
1-c_1\cdot e^{-\frac{l}{2}\log l}\leq P(v^*\in S)\leq 1-c_2\cdot e^{-l\cdot \log l},
\end{aligned}
\end{equation*}
where $c_1=\frac{7(d+1)}{d}$, $c_2=\frac{4d}{3(d-2)}$, $l=\frac{\log \left(\frac{2K(d-2)}{rd}+2\right)}{\log (d-1)}$.
\end{lemma}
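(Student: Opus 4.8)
The plan is to reduce the event $\{v^*\in S\}$ to a bounded-distance event, and then control the probability that the rumor center $v_c$ is displaced from the true source $v^*$ by more than $l$ hops.

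First I would translate the selection rule into a purely geometric condition. By rule b) the respondents are picked in increasing order of hop-distance from the center $v_c$, so $v^*\in S$ amounts to $v^*$ lying within the appropriate ball around $v_c$. On a $d$-regular tree the number of nodes within hop-distance $\ell$ of any fixed node is $1+\frac{d}{d-2}\big[(d-1)^{\ell}-1\big]$. Setting this count equal to $2K/r$ and solving for $\ell$ recovers, up to lower-order terms, the stated radius $l=\log\!\big(\tfrac{2K(d-2)}{rd}+2\big)/\log(d-1)$. Hence, up to the nodes sitting at distance exactly $l$, the event $\{v^*\in S\}$ coincides with $\{d(v^*,v_c)\le l\}$, and it suffices to bound $P\big(d(v^*,v_c)>l\big)$ from both sides.

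Second, and this is the core of the argument, I would bound the tail of the displacement $d(v^*,v_c)$ between the true source and the rumor-centrality center. Under the joint SI spreading on a $d$-regular tree, the maximizer $v_c$ of $P(\sigma_v|v)R\big(v,T_{bfs}(v)\big)$ concentrates near $v^*$, so the task is to estimate the probability that the infected subtree has its rumor center at distance exceeding $l$ from the origin. I would write this displacement probability as a sum over the admissible infection permutations $\sigma_v$, and bound the number of spreading orders consistent with a displacement of $k$ hops. Using the combinatorial reading of $R\big(v,T_{bfs}(v)\big)$ as a count of spreading orders, together with Stirling-type estimates for the factorial suppression of such orders against the $(d-1)^k$ branching growth of the tree, yields a super-exponential decay of the form $e^{-\Theta(k\log k)}$; the $\log k$ factor is precisely what produces the $e^{-l\log l}$ and $e^{-\frac{l}{2}\log l}$ terms.

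Third, I would assemble the two-sided estimate. The upper bound $P\big(d(v^*,v_c)>l\big)\le c_1 e^{-\frac{l}{2}\log l}$ and the lower bound $P\big(d(v^*,v_c)>l\big)\ge c_2 e^{-l\log l}$ come from matching the leading-order decay against, respectively, an over-count and an under-count of the displacement permutations, with the degree-dependent prefactors from the per-level branching absorbed into $c_1=\frac{7(d+1)}{d}$ and $c_2=\frac{4d}{3(d-2)}$; taking complements gives the claimed bounds on $P(v^*\in S)$. The hard part will be the second step: establishing the super-exponential tail $e^{-\Theta(l\log l)}$ on the source-to-center displacement, which demands a careful large-deviation analysis of the rumor-centrality estimator on the regular tree. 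The gap between the exponents $\frac{l}{2}\log l$ and $l\log l$ reflects the looseness between the upper and lower large-deviation estimates, and sharpening each side to the stated constants will be the most delicate computation.
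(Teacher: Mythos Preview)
Your proposal diverges from the paper's proof in two respects, one substantive and one a genuine gap.

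First, the paper does not attempt to derive the displacement tail $P\big(d(v^*,v_c)>l\big)$ from scratch. Its proof decomposes $P(v^*\in S)=P(v^*\in V_{hop})P(v^*\in V_{ord})$, where $V_{hop}$ is the set of $2K/r$ nodes closest to $v_c$ and $V_{ord}$ is the set passing the reliability filter in rule~c). It then simply assumes $v^*\in V_{ord}$ and cites existing results (Lemma~1 of Choi et al.\ and Corollary~2 of Khim--Loh) for the two-sided bound on $P(v^*\in V_{hop})$. Your plan, by contrast, would re-derive those super-exponential displacement bounds via a combinatorial/Stirling argument on spreading orders. That is a legitimate, more self-contained route, but it is far more work than the paper actually does, and the sketch you give (``factorial suppression against $(d-1)^k$ branching'') is too vague to be a proof: the known derivations of the $e^{-\Theta(l\log l)}$ tail for the rumor center on regular trees go through P\'olya-urn or random-walk arguments on subtree sizes rather than raw permutation counts, and matching the explicit constants $c_1=\frac{7(d+1)}{d}$, $c_2=\frac{4d}{3(d-2)}$ requires those specific calculations.

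Second, and this is an actual gap, you conflate $S$ with the hop-ball of radius $l$. The $2K/r$ count you invert to get $l$ corresponds to $V_{hop}$, not to $S$: rule~c) then discards half of those nodes by reliability, so $\{v^*\in S\}$ is strictly smaller than $\{d(v^*,v_c)\le l\}$. You need the product decomposition and the standing assumption $v^*\in V_{ord}$ (which the paper invokes explicitly) to reduce to the purely geometric event; without it your reduction in Step~1 is not valid.
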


\begin{proof}
From the selection rule of respondents, we have
\begin{equation*}\small
 \setlength{\abovedisplayskip}{3pt}
\setlength{\belowdisplayskip}{3pt}
\begin{aligned}	
P(v^*\in S)=P(v^*\in V_{hop})P(v^*\in V_{ord}).
\end{aligned}
\end{equation*}
where $V_{hop}$ is the set of $2K/r$ nodes in the decreasing order of the distance from 
$v_c$, and $V_{ord}$ is the set of nodes whose reliability satisfies the requirements 
in Section \uppercase\expandafter{\romannumeral4}-B.

From our setting that $v^*$ is included in $V_{ord}$, and Lemma 1 in \cite{choi2018necessary} and Corollary 2 in \cite{khim2016confidence}, we obtain the final results.
\end{proof}
\vspace{-1mm}
\emph{Step 2: Suppose that $v^*$ is included in the set of selected respondents $S$, we calculate the conditional probability that $v^*$ is exactly equal to the estimated source $\hat{v}$ in Lemma 2.}
\vspace{-1mm}
\begin{lemma}
Given $v^*\in S$, for any querying with $r$ rounds and budget $K$, we have
\begin{equation*}\small
 \setlength{\abovedisplayskip}{3pt}
\setlength{\belowdisplayskip}{3pt}
\begin{aligned}	
P(\hat{v}\!=\!v^*|v^*\!\in\!S)\!\leq\!K\cdot H_G,
\end{aligned}
\end{equation*}
where
$\scriptstyle{H_G\!=\!\left[(1\!-\!\tilde{H_1})\!+\!f(1\!-\!f) 2^{K\!/\!r\!-\!1}(\log d\!-\!\tilde{H_2})\right]/[H(T)\left(K/r\!-\!1\right)\cdot \log (K/2r)}]$,
$\scriptstyle{\tilde{H_1}}$, $\scriptstyle{\tilde{H_2}}$ are both constants and $\scriptstyle{f}$ is a function of the reliability of users, and $\scriptstyle{H(T)}$ is the entropy of the infection time vector.
\end{lemma}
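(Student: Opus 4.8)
The plan is to prove this converse as an information-theoretic impossibility bound via Fano's inequality, viewing the querying protocol as a noisy channel whose input is the identity of the true source $v^*$ and whose output is the collection of query answers. Since we condition on $v^*\in S$, the estimator must single out $v^*$ among the $|S|=K/r$ selected respondents, so the success probability $P(\hat v=v^*\mid v^*\in S)$ is governed by how much the answers reveal about $v^*$ relative to the intrinsic uncertainty of the spreading process that generated the observed subnetwork.

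First I would set up the Markov chain $v^*\to Y\to\hat v$, where $Y=(Y_{id},Y_{dir})$ collects the id-question and dir-question answers gathered over the $r$ rounds from each of the $K/r$ respondents and $\hat v$ is the (majority-rule aggregated) estimate. A Fano-type inequality then controls $P(\hat v=v^*\mid v^*\in S)$ by the ratio of the mutual information $I(v^*;Y)$ to the conditional entropy of the source given the observed subnetwork, up to the standard additive binary-entropy term. The task thus reduces to (i) upper bounding the numerator $I(v^*;Y)$ and (ii) lower bounding the source-uncertainty in the denominator; the outer factor $K$ then emerges from aggregating the per-round, per-respondent contributions over the whole budget.

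For the numerator I would use $I(v^*;Y)\le I(v^*;Y_{id})+I(v^*;Y_{dir})$. Each id answer is a binary yes/no corrupted by the respondent's (un)reliability, contributing at most the $(1-\tilde H_1)$ term, where $\tilde H_1$ is the residual binary entropy of a noisy truthful answer. Each dir answer selects among the $d$ neighbors of the $d$-regular tree and is likewise corrupted, contributing at most $\log d-\tilde H_2$ per informative answer; collecting these across respondents and rounds, weighting by the reliability-dependent flip factor $f(1-f)$ and the combinatorial count $2^{K/r-1}$ of distinct answer configurations, reproduces the numerator of $H_G$. For the denominator I would lower bound the source entropy given the observed subnetwork using the entropy $H(T)$ of the infection-time vector together with the number of distinguishable spreading orders consistent with $\overline G$ (the rumor-centrality permutation count on the tree), which produces the $H(T)(K/r-1)\log(K/2r)$ factor.

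The hard part will be controlling $I(v^*;Y_{dir})$: the directional answers are correlated both across rounds (same respondent, majority aggregation) and across respondents (their answers jointly constrain one spreading tree), and the corruption is heterogeneous through the reliability function $f$. Decoupling these dependencies via a chain-rule expansion and the tree structure—so as to obtain the clean product $f(1-f)2^{K/r-1}(\log d-\tilde H_2)$ rather than a loose union bound—is the crux of the argument. A secondary obstacle is establishing the matching entropy lower bound on the source location through $H(T)$, which rests on the permutation-counting interpretation of rumor centrality and must be shown to dominate the naive $\log(K/r)$ scale so that the ratio tightens to $K\cdot H_G$.
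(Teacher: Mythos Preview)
Your plan diverges from the paper's proof in a way that leaves a genuine gap. The paper does not apply Fano directly to the random variable $v^*$. Instead it first \emph{reduces source detection to directed-graph recovery}: it introduces $\widetilde G$, the tree with every edge oriented toward the true source, and bounds $P(\hat G=\widetilde G)$ by conditioning on $N_{dir}=k$ (the number of respondents who answer the \emph{dir} question), summing $\sum_{k}P(\hat G=\widetilde G\mid N_{dir}=k)P(N_{dir}=k)$, and invoking the information-theoretic graph-learning bound of Netrapalli--Sanghavi for the conditional term. The identity $\sum_{k=0}^{K/r-1}k\binom{K/r}{k}=\tfrac{K}{r}\,2^{K/r-1}$ is where the factor $2^{K/r-1}$ actually originates, and $f$ arises from a two-case bound on $\eta_{\max}^{k}(1-\eta_{\min})^{K/r-k}$ according to whether $\eta_{\max}+\eta_{\min}\gtrless 1$; neither comes from a ``count of answer configurations'' or a generic ``flip factor'' as you suggest.

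The critical problem with your route is the denominator. Conditioning on $v^*\in S$, the prior entropy of $v^*$ is at most $\log(K/r)$; a Fano bound of the form $P(\hat v=v^*)\le (I(v^*;Y)+1)/H(v^*)$ can therefore never yield the claimed denominator $H(T)(K/r-1)\log(K/2r)$. That much larger scale reflects uncertainty about the \emph{entire} oriented spreading tree (roughly $K/r-1$ edges, each with about $d$ possibilities), which only becomes the relevant hypothesis class after the graph-estimation reduction. Your proposed ``lower bound on the source entropy via $H(T)$ and permutation counts'' cannot inflate $H(v^*)$ beyond $\log|S|$; what it can do is enlarge the hypothesis space once you move from estimating $v^*$ to estimating $\widetilde G$. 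Without that reduction, the Fano numerator/denominator you describe will not reproduce $K\cdot H_G$.
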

\vspace{-2mm}
\begin{proof}
We give the proof by graph estimator. Let $\widetilde{G}$ denote the graph where each edge points to the source, and $\hat{G}$ denote an estimator of $\widetilde{G}$ by querying with one round and budget $K$. Given the number of answers to the \emph{dir} question, i.e., $N_{dir}=k$, we have
\begin{equation*}\small
 \setlength{\abovedisplayskip}{3pt}
\setlength{\belowdisplayskip}{3pt}
\begin{aligned}	
P(\hat{G}\!=\!\widetilde{G})\!=\!\sum\nolimits_{k=0}^{K/r\!-\!1}P(\hat{G}\!=\!G|N_{dir}\!=\!k)P(N_{dir}\!=\!k).
\end{aligned}
\end{equation*}

Firstly, note that the event $N_{dir}\!=\!k$ is equivalent to that there are $k$ nodes answering ``yes'' for the \emph{id} question in  $S$. Therefore, the probability of $N_{dir}\!=\!k$ can be calculated as
\begin{equation*}\small
 \setlength{\abovedisplayskip}{3pt}
\setlength{\belowdisplayskip}{3pt}
\begin{aligned}	
P(N_{dir}\!=\!k)&\!=\!\sum_{I\!\subset\! \{1,\!\cdots\!,K/r\},|I|\!=\!k}\bigg[\prod_{i\in I}\eta_i\prod_{i\in \{1,\cdots,K/r\}/I}(1\!-\!\eta_i)\bigg]\\
&\leq \sum_{I\!\subset\! \{1,\!\cdots\!,K/r\},|I|\!=\!k}\eta^k_{max}\cdot (1\!-\!\eta_{min})^{K/r-k}\\
&=\binom{K/r}{k}\cdot \eta^k_{max}\cdot (1\!-\!\eta_{min})^{K/r-k},
\end{aligned}
\end{equation*}
where $\eta_{max}$ and $\eta_{min}$ indicate the maximum and minimum reliability of all $K/r$ nodes in $S$, respectively.

Secondly, from the information theory bound of graph estimator in \cite{netrapalli2012learning}, we have
\begin{equation*}\small
 \setlength{\abovedisplayskip}{3pt}
\setlength{\belowdisplayskip}{3pt}
\begin{aligned}	
P(\hat{G}\!=\!\widetilde{G}|N_{dir}\!=\!k)\leq \frac{k\left(\log d\!-\!H(Y_i)\right)}{(K/r\!-\!1)\cdot d\cdot \log\frac{K}{2r}},
\end{aligned}
\end{equation*}
where $H(Y_i)$ represents the entropy of the infection order.
And thus, the probability of correct estimator satisfies
\begin{equation*}\small
 \setlength{\abovedisplayskip}{5pt}
\setlength{\belowdisplayskip}{5pt}
\begin{aligned}	
P(\hat{G}\!=\!G)\!\leq\!\sum_{k=0}^{K/r\!-\!1}\binom{\frac{K}{r}}{k}k\!\cdot\!\eta^k_{max} (1\!-\!\eta_{min})^{\frac{K}{r}-k}\frac{k\left(\log d\!-\!H(Y_i)\right)}{(n\!-\!1)\cdot d\cdot \log\frac{K}{2r}}.
\end{aligned}
\end{equation*}
If $\eta_{max}\!+\!\eta_{min}\!>\!1$, the probability of correct estimator satisfies
\begin{equation*}\small
 \setlength{\abovedisplayskip}{4pt}
\setlength{\belowdisplayskip}{4pt}
\begin{aligned}	
P(\hat{G}\!=\!\widetilde{G})&\leq \sum_{k=0}^{K/r\!-\!1}\binom{K/r}{k}k\!\cdot\!  \eta^{K/r}_{max}\frac{k\left(\log d\!-\!H(Y_i)\right)}{(K/r\!-\!1)\cdot d\cdot \log\frac{K}{2r}} \\
&=\eta^{K/r}_{max}\sum_{k=0}^{K/r-1} k \binom{K/r}{k}\frac{k\left(\log d-H(Y_i)\right)}{(K/r-1)\cdot d\cdot \log\frac{K}{2r}}\\
&=\eta^{K/r}_{max}\cdot K/r\cdot 2^{K/r-1}\cdot\frac{k\left(\log d-H(Y_i)\right)}{(K/r-1)\cdot d\cdot \log\frac{K}{2r}}.
\end{aligned}
\end{equation*}
Otherwise, if $0\leq\eta_{max}\!+\!\eta_{min}\!\leq\!1$, we have
\begin{equation*}\small
\setlength{\abovedisplayskip}{3pt}
\setlength{\belowdisplayskip}{3pt}
\begin{aligned}	
P(\hat{G}\!=\!\widetilde{G})&\!\leq\! \sum_{k=0}^{{K/r}\!-\!1}\binom{{K/r}}{k}\!\cdot\!k\!\cdot\! (1\!-\!\eta_{min})^{K/r}\!\cdot\!\frac{k\left(\log d\!-\!H(Y_i)\right)}{({\frac{K}{r}}\!-\!1)\cdot d\cdot \log\frac{K}{2r}}\\
&=(1\!-\!\eta_{min})^{\frac{K}{r}}\!\cdot\! n\!\cdot\! 2^{{\frac{K}{r}}-1}\!\cdot\!\frac{k\left(\log d\!-\!H(Y_i)\right)}{({K/r}\!-\!1)\cdot d\cdot \log{\frac{K}{2r}}}.
\end{aligned}
\end{equation*}
Let
\begin{equation}\small
\setlength{\abovedisplayskip}{3pt}
\setlength{\belowdisplayskip}{3pt}
\label{f}
f\triangleq f(\eta_{max},\eta_{min})\!=\!
\begin{cases}
\eta^n_{max}& \eta_{max}\!+\!\eta_{min}\!>\!1\\
(1\!-\!\eta_{min})^n& 0\leq\eta_{max}\!+\!\eta_{min}\!\leq\!1,
\end{cases}
\end{equation}
we extend the above result to querying with $r$ rounds as 
\begin{equation*}\small
\setlength{\abovedisplayskip}{4pt}
\setlength{\belowdisplayskip}{4pt}
\begin{aligned}	
P(\hat{G}\!=\!\widetilde{G})\!\leq\! \frac{rd\sum_{i\!=\!1}^{\frac{K}{r}}\!\left[\!(1\!-\!H_1\!(\eta_i\!))\!+\!f(1\!-\!f) 2^{\frac{K}{r}\!-\!1}(\log d\!-\!H_2\!(\eta_i\!))\!\right]\!}{H(T)\left(\frac{K}{r}\!-\!1\right)\cdot d\cdot \log\frac{K}{2r}},
\end{aligned}
\end{equation*}
where
\begin{equation*}\small
\setlength{\abovedisplayskip}{4pt}
\setlength{\belowdisplayskip}{3pt}
\begin{aligned}	
H_1(\eta_i)&\!=\!\eta_i\log \eta_i\!+\!(1\!-\!\eta_i)\log(1\!-\!\eta_i)\\
&\geq \eta_{min}\log \eta_{min}\!+\!(1\!-\!\eta_{max})\log(1\!-\!\eta_{max})\triangleq\tilde{H_1},
\end{aligned}
\end{equation*}
\begin{equation*}\small
\setlength{\abovedisplayskip}{3pt}
\setlength{\belowdisplayskip}{0.1pt}
\begin{aligned}	
H_2(\eta_i)&\!=\!\eta_i\log \eta_i\!+\!(1\!-\!\eta_i)\log\frac{(1\!-\!\eta_i)}{d\!-\!1}\\
&\geq \eta_{min}\log \eta_{min}\!+\!(1\!-\!\eta_{max})\log(\frac{1\!-\!\eta_{max}}{d\!-\!1})\triangleq\tilde{H_2}.
\end{aligned}
\end{equation*}
Hence, we have
\begin{equation*}\small
\setlength{\abovedisplayskip}{0.1pt}
\setlength{\belowdisplayskip}{3pt}
\begin{aligned}	
P(\hat{G}\!=\!\widetilde{G})&\!\leq\!\frac{r\cdot d\cdot\frac{K}{r}\left[(1\!-\!\tilde{H_1})\!+\!f(1\!-\!f) 2^{\frac{K}{r}\!-\!1}(\log d\!-\!\tilde{H_2})\right]}{H(T)\left(\frac{K}{r}\!-\!1\right)\cdot d\cdot \log\frac{K}{2r}}\\
&= K\cdot H_G.
\end{aligned}
\end{equation*}
Finally, given $v^*\!\in\!S$, the probability $\hat{v}\!=\!v^*$ is equivalent to the probability of correct graph estimator in the above.
\end{proof}
\vspace{-1mm}
\emph{Step 3: We now analyze the necessary budget to achieve $(1\!-\!\delta)$ detection probability.}
From the above, we have
\begin{equation*}\small
\setlength{\abovedisplayskip}{4pt}
\setlength{\belowdisplayskip}{4pt}
\begin{aligned}	
P(\hat{v}\neq v^*)&=P(v^*\notin S)+P(\hat{v}\neq v^*|v^*\in S)\\
&\geq c_2\cdot e^{-l\cdot \log l}+1-K\cdot H_G.
\end{aligned}
\end{equation*}
Thus, we obtain
\begin{equation*}\small
\begin{aligned}	
 \setlength{\abovedisplayskip}{1pt}
\setlength{\belowdisplayskip}{1pt}
K\leq \Big[(1-\delta)+c_2\cdot e^{-l\cdot \log l}\Big]H_G,
\end{aligned}
\end{equation*}
which completes the proof.

\bibliographystyle{IEEEtran}
\bibliography{mybib}

\begin{thebibliography}{10}
\providecommand{\url}[1]{#1}
\csname url@samestyle\endcsname
\providecommand{\newblock}{\relax}
\providecommand{\bibinfo}[2]{#2}
\providecommand{\BIBentrySTDinterwordspacing}{\spaceskip=0pt\relax}
\providecommand{\BIBentryALTinterwordstretchfactor}{4}
\providecommand{\BIBentryALTinterwordspacing}{\spaceskip=\fontdimen2\font plus
\BIBentryALTinterwordstretchfactor\fontdimen3\font minus
  \fontdimen4\font\relax}
\providecommand{\BIBforeignlanguage}[2]{{%
\expandafter\ifx\csname l@#1\endcsname\relax
\typeout{** WARNING: IEEEtran.bst: No hyphenation pattern has been}%
\typeout{** loaded for the language `#1'. Using the pattern for}%
\typeout{** the default language instead.}%
\else
\language=\csname l@#1\endcsname
\fi
#2}}
\providecommand{\BIBdecl}{\relax}
\BIBdecl

\bibitem{starbird2014rumors}
K.~Starbird, J.~Maddock, M.~Orand, P.~Achterman, and R.~M. Mason, ``Rumors,
  false flags, and digital vigilantes: Misinformation on twitter after the 2013
  boston marathon bombing,'' in \emph{Proc. iSchools IConference}, 2014, pp.
  654--662.

\bibitem{allcott2017social}
H.~Allcott and M.~Gentzkow, ``Social media and fake news in the 2016
  election,'' \emph{J. of economic perspectives}, vol.~31, no.~2, pp. 211--36,
  2017.

\bibitem{vosoughi2018spread}
S.~Vosoughi, D.~Roy, and S.~Aral, ``The spread of true and false news online,''
  \emph{Science}, vol. 359, no. 6380, pp. 1146--1151, 2018.

\bibitem{shu2017fake}
K.~Shu, A.~Sliva, S.~Wang, J.~Tang, and H.~Liu, ``Fake news detection on social
  media: A data mining perspective,'' in \emph{Proc. ACM SIGKDD}, 2017, pp.
  22--36.

\bibitem{nguyen2018interpretable}
A.~T. Nguyen, A.~Kharosekar, M.~Lease, and B.~Wallace, ``An interpretable joint
  graphical model for fact-checking from crowds,'' in \emph{Proc. AAAI}, 2018,
  pp. 1511--1518.

\bibitem{pitoura2018measuring}
E.~Pitoura, P.~Tsaparas, G.~Flouris, I.~Fundulaki, P.~Papadakos, S.~Abiteboul,
  and G.~Weikum, ``On measuring bias in online information,'' in \emph{Proc.
  ACM SIGMOD}, 2018, pp. 16--21.

\bibitem{shu2019studying}
K.~Shu, H.~R. Bernard, and H.~Liu, ``Studying fake news via network analysis:
  detection and mitigation,'' \emph{Emerging Research Challenges and
  Opportunities in Computational Social Network Analysis and Mining}, pp.
  43--65, 2019.

\bibitem{shah2011rumors}
D.~Shah and T.~Zaman, ``Rumors in a network: Who's the culprit?'' \emph{IEEE
  Trans. on information theory}, vol.~57, no.~8, pp. 5163--5181, 2011.

\bibitem{chang2015information}
B.~Chang, F.~Zhu, E.~Chen, and Q.~Liu, ``Information source detection via
  maximum a posteriori estimation,'' in \emph{Proc. ICDM}, 2015, pp. 21--30.

\bibitem{wang2017multiple}
Z.~Wang, C.~Wang, J.~Pei, and X.~Ye, ``Multiple source detection without
  knowing the underlying propagation model,'' in \emph{Proc. AAAI}, 2017, pp.
  1663--1672.

\bibitem{choi2017rumor}
J.~Choi, S.~Moon, J.~Woo, K.~Son, J.~Shin, and Y.~Yi, ``Rumor source detection
  under querying with untruthful answers,'' in \emph{Proc. IEEE INFOCOM}, 2017,
  pp. 1--9.

\bibitem{choi2018necessary}
J.~Choi and Y.~Yi, ``Necessary and sufficient budgets in information source
  finding with querying: Adaptivity gap,'' in \emph{Proc. IEEE ISIT}, 2018, pp.
  2261--2265.

\bibitem{choi2019information}
J.~Choi, J.~Shin, and Y.~Yi, ``Information source localization with protector
  diffusion in networks,'' \emph{J. of Communications and Networks}, vol.~21,
  no.~2, pp. 136--147, 2019.

\bibitem{truthsurvey}
Y.~Li, J.~Gao, C.~Meng, Q.~Li, L.~Su, B.~Zhao, W.~Fan, and J.~Han, ``A survey
  on truth discovery,'' \emph{ACM SIGKDD Explorations Newsletter}, vol.~17,
  no.~2, pp. 1--16, 2016.

\bibitem{castillo2011information}
C.~Castillo, M.~Mendoza, and B.~Poblete, ``Information credibility on
  twitter,'' in \emph{Proc. ACM WWW}, 2011, pp. 675--684.

\bibitem{yang2012automatic}
F.~Yang, Y.~Liu, X.~Yu, and M.~Yang, ``Automatic detection of rumor on sina
  weibo,'' in \emph{Proc. ACM SIGKDD}, 2012, pp. 13--19.

\bibitem{jin2016news}
Z.~Jin, J.~Cao, Y.~Zhang, and J.~Luo, ``News verification by exploiting
  conflicting social viewpoints in microblogs,'' in \emph{Proc. AAAI}, 2016,
  pp. 2972--2978.

\bibitem{nguyen2018believe}
A.~T. Nguyen, A.~Kharosekar, S.~Krishnan, S.~Krishnan, E.~Tate, B.~C. Wallace,
  and M.~Lease, ``Believe it or not: Designing a human-ai partnership for
  mixed-initiative fact-checking,'' in \emph{Proc. ACM UIST}, 2018, pp.
  189--199.

\bibitem{mohtarami2018automatic}
M.~Mohtarami, R.~Baly, J.~Glass, P.~Nakov, L.~M{\`a}rquez, and A.~Moschitti,
  ``Automatic stance detection using end-to-end memory networks,'' \emph{arXiv
  preprint arXiv:1804.07581}, 2018.

\bibitem{friggeri2014rumor}
A.~Friggeri, L.~Adamic, D.~Eckles, and J.~Cheng, ``Rumor cascades,'' in
  \emph{Proc. ICWSM}, 2014, pp. 101--110.

\bibitem{ma2015detect}
J.~Ma, W.~Gao, Z.~Wei, Y.~Lu, and K.-F. Wong, ``Detect rumors using time series
  of social context information on microblogging websites,'' in \emph{Proc. ACM
  CIKM}, 2015, pp. 1751--1754.

\bibitem{ruchansky2017csi}
N.~Ruchansky, S.~Seo, and Y.~Liu, ``Csi: A hybrid deep model for fake news
  detection,'' in \emph{Proc. ACM CIKM}, 2017, pp. 797--806.

\bibitem{tacchini2017some}
E.~Tacchini, G.~Ballarin, M.~L. Della~Vedova, S.~Moret, and L.~de~Alfaro,
  ``Some like it hoax: Automated fake news detection in social networks,''
  \emph{arXiv preprint arXiv:1704.07506}, 2017.

\bibitem{kwon2013prominent}
S.~Kwon, M.~Cha, K.~Jung, W.~Chen, and Y.~Wang, ``Prominent features of rumor
  propagation in online social media,'' in \emph{Proc. IEEE ICDM}, 2013, pp.
  1103--1108.

\bibitem{wu2015false}
K.~Wu, S.~Yang, and K.~Q. Zhu, ``False rumors detection on sina weibo by
  propagation structures,'' in \emph{Proc. IEEE ICDE}, 2015, pp. 651--662.

\bibitem{samadi2016claimeval}
M.~Samadi, P.~Talukdar, M.~Veloso, and M.~Blum, ``Claimeval: Integrated and
  flexible framework for claim evaluation using credibility of sources,'' in
  \emph{Proc. AAAI}, 2016, pp. 222--228.

\bibitem{popat2017truth}
K.~Popat, S.~Mukherjee, J.~Str{\"o}tgen, and G.~Weikum, ``Where the truth lies:
  Explaining the credibility of emerging claims on the web and social media,''
  in \emph{Proc. WWW}, 2017, pp. 1003--1012.

\bibitem{wu1983convergence}
C.~J. Wu \emph{et~al.}, ``On the convergence properties of the em algorithm,''
  \emph{The Annals of statistics}, vol.~11, no.~1, pp. 95--103, 1983.

\bibitem{rozemberczki2019gemsec}
B.~Rozemberczki, R.~Davies, R.~Sarkar, and C.~Sutton, ``Gemsec: Graph embedding
  with self clustering,'' in \emph{Proc. IEEE/ACM ASONAM}, 2019, pp. 65--72.

\bibitem{ferreira2016emergent}
W.~Ferreira and A.~Vlachos, ``Emergent: a novel data-set for stance
  classification,'' in \emph{Proc. NAACL-HLT}, 2016, pp. 1163--1168.

\bibitem{wang2012truth}
D.~Wang, L.~Kaplan, H.~Le, and T.~Abdelzaher, ``On truth discovery in social
  sensing: A maximum likelihood estimation approach,'' in \emph{Proc. ACM
  IPSN}, 2012, pp. 233--244.

\bibitem{yin2008truth}
X.~Yin, J.~Han, and S.~Y. Philip, ``Truth discovery with multiple conflicting
  information providers on the web,'' \emph{IEEE Trans. on Knowledge and Data
  Engineering}, vol.~20, no.~6, pp. 796--808, 2008.

\bibitem{estimation}
D.~Wang, M.~T. Amin, S.~Li, and et~al., ``Using humans as sensors: an
  estimation-theoretic perspective,'' in \emph{Proc. IEEE IPSN}, 2014, pp.
  35--46.

\bibitem{khim2016confidence}
J.~Khim and P.-L. Loh, ``Confidence sets for the source of a diffusion in
  regular trees,'' \emph{IEEE Trans. on Network Science and Engineering},
  vol.~4, no.~1, pp. 27--40, 2016.

\bibitem{netrapalli2012learning}
P.~Netrapalli and S.~Sanghavi, ``Learning the graph of epidemic cascades,'' in
  \emph{Proc. ACM SIGMETRICS}, 2012, pp. 211--222.

\end{thebibliography}

\end{document}